\newtheorem{theorem}{Theorem}
\newcommand{\ket}[1]{\ensuremath{|#1\rangle}}
\DeclareMathOperator*{\argmin}{arg\,min}
\DeclareMathOperator{\ent}{S}  
\DeclareMathOperator{\cent}{S_{A|B}}  
\DeclareMathOperator{\Tr}{Tr}
\DeclareMathOperator{\sent}{H} 
\DeclareMathOperator{\osent}{H_A} 
\DeclareMathOperator{\hash}{D_{\rightarrow}} 
\DeclareMathOperator{\func}{f}
\DeclareMathOperator{\qcap}{Q}
\DeclareMathOperator{\Free}{{\cal F_{\mathrm v} (H)}}
\DeclareMathOperator{\All}{{\cal S(H)}}
\newcommand{\orig}{\gamma}
\newcommand{\test}{\delta}
\newcommand{\A}{{\mathchoice{}{}{\scriptscriptstyle}{}A}}
\newcommand{\B}{{\mathchoice{}{}{\scriptscriptstyle}{}B}}
\begin{document}


\title{Witnessing Negative Conditional Entropy}


\author{Mahathi Vempati}
\email[]{mahathi.vempati@research.iiit.ac.in}
\affiliation{Center for Computational Natural Sciences and Bioinformatics, International Institute of Information Technology-Hyderabad, Gachibowli, Telangana-500032, India.}
\author{Nirman Ganguly}
\email[]{nirmanganguly@hyderabad.bits-pilani.ac.in}
\affiliation{Department of Mathematics, Birla Institute of Technology and Science Pilani, Hyderabad Campus, Telangana-500078, India.}
\author{Indranil Chakrabarty}
\email[]{indranil.chakrabarty@iiit.ac.in}
\affiliation{Center for Security, Theory and Algorithmic Research, International Institute of Information Technology-Hyderabad, Gachibowli, Telangana-500032, India.}
\affiliation{Quantum Information and Computation Group,
	Harish-Chandra Research Institute, HBNI, Allahabad 211019, India.}
\author{Arun K Pati}
\email[]{akpati@hri.res.in}
\affiliation{Quantum Information and Computation Group,
	Harish-Chandra Research Institute, HBNI, Allahabad 211019, India.}


\date{\today}

\begin{abstract}
    Quantum states that possess negative conditional von Neumann entropy provide quantum advantage in several information-theoretic protocols including superdense coding, state merging, distributed private randomness distillation and one-way entanglement distillation. While entanglement is an important resource, only a subset of entangled states have negative conditional von Neumann entropy. In this work, we characterize the class of density matrices having non-negative conditional von Neumann entropy as convex and compact. This allows us to prove the existence of a Hermitian operator (a witness) for the detection of states having negative conditional entropy for bipartite systems in arbitrary dimensions. We show two constructions of such witnesses. For one of the constructions, the expectation value of the witness in a state is an upper bound to the conditional entropy of the state. We pose the problem of obtaining a tight upper bound to the set of conditional entropies of states in which an operator gives the same expectation value. We solve this convex optimization problem numerically for a two qubit case and find that this enhances the usefulness of our witnesses. We also find that for a particular witness, the estimated tight upper bound matches the value of conditional entropy for Werner states. We explicate the utility of our work in the detection of useful states in several protocols. 
\end{abstract}


\maketitle

\section{Introduction}
\label{intro}
Entanglement is an indispensable resource for several information processing tasks \cite{enthorodecki}. However, not all entangled states qualify to be a resource in some information processing tasks. For example, under the standard teleportation scheme, only entangled states whose fully entangled fraction is above a threshold value are considered to be useful \cite{telwit}. Therefore, even within the set of entangled states one needs to identify distinctive features from an operational perspective.
\par One such distinctive feature is given by conditional von Neumann entropy, also known as the quantum conditional entropy \cite{negentropy}. The quantum conditional entropy (henceforth referred to as conditional entropy when there is no ambiguity) for a quantum state $ \rho_{AB} $ is defined as
\begin{equation}
\cent (\rho_{AB}) = \ent(\rho_{AB}) - \ent(\rho_B),
\end{equation} where $ \ent(\rho)=-\Tr(\rho \log \rho) $ is the von Neumann entropy of the state. Unlike its classical counterpart, conditional entropy in the quantum realm can be negative \cite{negentropy} and thus can be exploited for quantum information tasks \cite{negentropy, non-additivity}. Its operational interpretation is given in the context of quantum state merging \cite{statemerging1,statemerging2}, where negative conditional entropy is an indication of resources for future communication \cite{statemerging1,statemerging2}. Negative conditional entropy provides quantum advantage in superdense coding \cite{sdc1,sdc2,sdc3} and  is a characteristic of states for which one-way entanglement distillation is possible \cite{one_way_entangle}. It also helps in the reduction in the uncertainty in predicting the outcomes of two incompatible measurements \cite{ur1} and in the maximization of  the rates of distributed private randomness distillation \cite{private_rand}.

The significance of negative conditional entropy has a wide-ranging impact beyond quantum information theory.  In \cite{blackholes_entropy, blackholes_second_law}, a modification of the Bekenstein-Hawking area law for a Schwarzchild black hole that resolves previously present paradoxes is shown, where simple entropy in the original law is replaced with the negative of conditional entropy of the purifying system of the black hole with respect to the positive-energy particles in the black hole. According to the modified law, when the change in the above-mentioned conditional entropy is negative, there is an increase in the surface area of the black hole. In the context of thermodynamics, it is shown that the work cost of erasing a system is proportional to the conditional entropy of the state of the system with respect to that of the observer \cite{thermo}. In particular, when this conditional entropy is negative, the observer gains work when erasing the system, thereby cooling the environment rather than heating it up.

While all states possessing negative conditional entropy are entangled, the converse is not true. This fact, along with the vast range of applications like those mentioned above warrants that quantum states are characterized with respect to their conditional entropy. To this end, in \cite{patro}, we have characterized states whose conditional entropy remains non-negative even after the application of global unitary operations. Conditional entropy is also not a direct observable, and this calls for practical methods to detect states that possess this resource. 

\par This paper is organized as follows: In Section \ref{sec:exist}, we characterize the class of states with non-negative conditional entropy as convex and compact. This allows for the existence of witnesses for all states outside this class, and we construct such a witness. We also discuss examples of this witness for Werner and Isotropic states. In Section \ref{sec:upper}, we prove that the expectation value of any of the witnesses in a state is an upper bound to the state's conditional entropy. We then pose the problem of obtaining a tight upper bound to the set of conditional entropies of states in which an operator gives the same expectation value. We solve this convex optimization problem numerically, and discuss several interesting results. In Section \ref{sec:geo}, we provide an alternative witness formulation exploiting geometric methods. In Section \ref{sec:app}, we discuss some applications of our work. In Section \ref{sec:conclusion}, we conclude.

\begin{figure}
	\includegraphics[width=\linewidth]{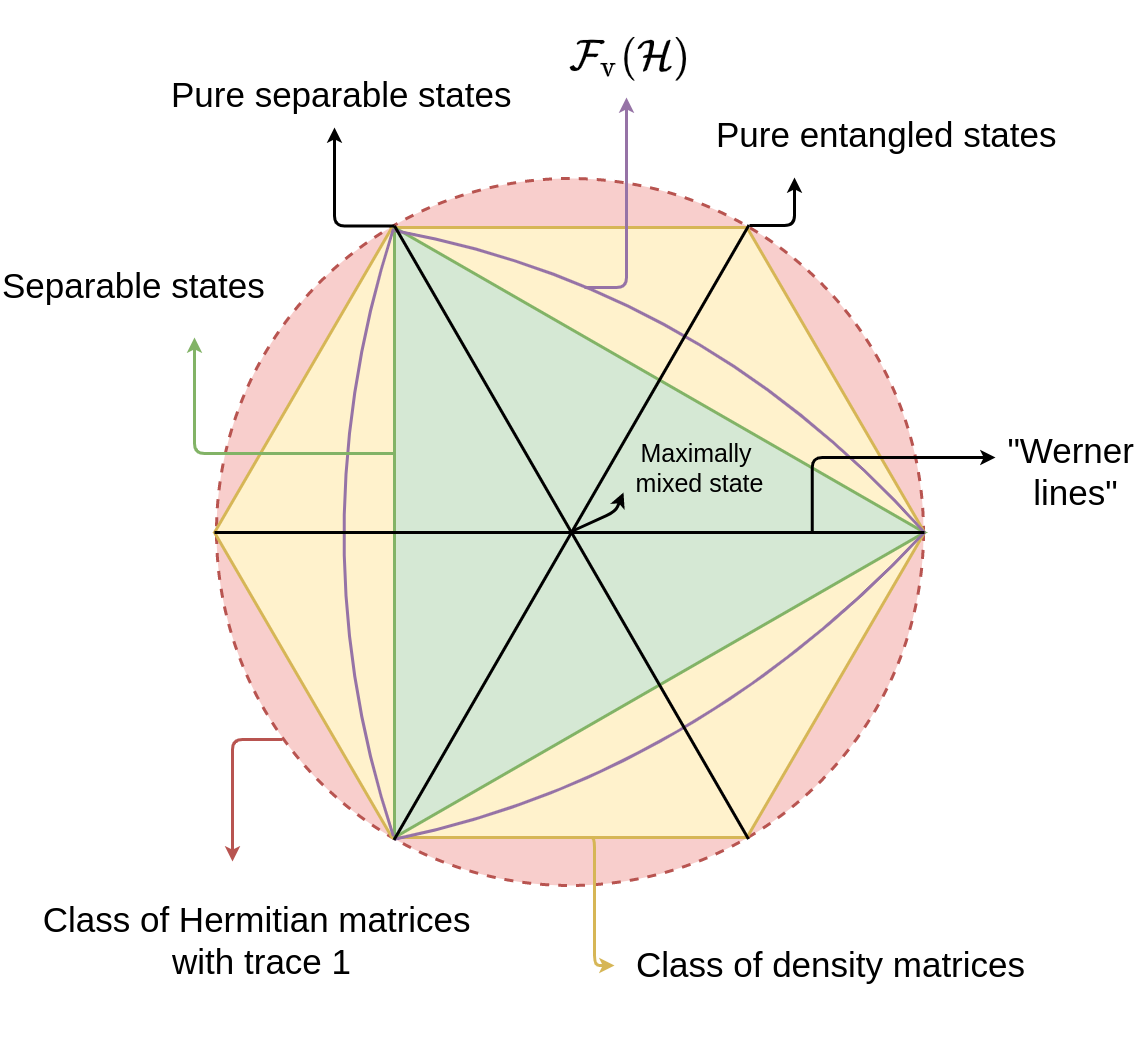}
	\caption{Schematic diagram of $\cal{F}_{\mathrm v} (\cal H)$: The outer class is the set of all Hermitian matrices with Trace 1. It extends beyond the figure as indicated by the dotted lines. Within that is the class of density matrices. The pure states (entangled or separable) lie farthest from the maximally mixed state. The density matrix set is the set of all convex combinations of pure state projectors -- but since there are infinite such projectors, it does not form a polytope, unlike what it may seem like from the figure. Within the density matrix space is the convex class of separable states. $\cal{F}_{\mathrm v} (\cal H)$ is also a convex class and it contains the class of separable states.}
	\label{fig:cvenn}
\end{figure}
\section{Existence and construction of a witness for negative conditional entropy states}
\label{sec:exist}
\subsection{Characterization of non-negative conditional entropy states}
\label{sec:char}
In this work, the set of all quantum states (density matrices) over a Hilbert space ${\cal H}$ is denoted by ${\cal S(H)}$, and ${\cal H}$ has the dimensionality $d \otimes d$. We refer to the class of states possessing non-negative conditional entropy as $\Free$. A schematic representation of $\cal{F}_{\mathrm v} (\cal H)$ is shown in FIG. \ref{fig:cvenn}. We now prove that $\cal{F}_{\mathrm v} (\cal H)$ is convex and compact.
\begin{theorem}
\label{convex_compact_theorem}
	$\cal{F}_{\mathrm v} (\cal H)$ is convex and compact.
\end{theorem} 
\begin{proof} 
\par \textit{$\cal{F}_{\mathrm v} (\cal H)$ is convex---} 
 Let  $\sigma_1, \sigma_2 \in \cal{F}_{\mathrm v} (\cal H)$, i.e., $\cent(\sigma_1), \cent(\sigma_2) \geq 0 $. Let $ \sigma = \lambda \sigma_1 + (1-\lambda)\sigma_2 $, where $ \lambda \in [0,1] $. From the concavity of $\cent$ \cite{nielsen}, we have $\cent(\sigma) \geq \lambda \cent(\sigma_1) + (1 - \lambda)\cent(\sigma_2)$. Therefore, $\cent(\sigma) \geq 0$ and $\sigma \in \cal{F}_{\mathrm v} (\cal H)$, implying $\cal{F}_{\mathrm v} (\cal H)$ is convex. 

\par \textit{$\cal{F}_{\mathrm v} (\cal H)$ is compact---}
For $\sigma_{AB} \in \Free$ we have $\cent(\sigma_{AB}) \geq 0$ by definition. The value $0$ is attained for several states in $\Free$, for example, all pure separable states. From the subadditivity of $\cent$ \cite{nielsen}, we have $\ent(\sigma_{AB}) \leq \ent(\sigma_A) + \ent(\sigma_B)$, implying $\cent(\sigma_{AB}) \leq \ent(\sigma_A) \leq \log_2 d$. This value is attained at $I/d^2$. Thus,  the image set of $\Free$ under $\cent$ is a closed set, i.e. $\cent \left( \Free \right) = [0, \log_2d]$. As $\cent$ is a continuous function under trace norm \cite{alicki,winter}, we conclude that $\Free$  is also closed \cite{hahn}. $\Free$  is bounded under trace norm as every density matrix has a bounded spectrum. For our finite dimensional Hilbert space, this implies that $\Free$  is compact. 
\end{proof}	
By the Hahn-Banach theorem \cite{hahn}, this convexity and compactness of $\cal{F}_{\mathrm v} (\cal H)$  implies that for every state with negative conditional entropy, there exists a Hermitian operator that separates it from $\cal{F}_{\mathrm v} (\cal H)$  (known as a witness). We provide an explicit expression for a family of such witnesses below.

\subsection{Witness to detect negative conditional entropy}
\label{sec:witness}
To prove that a Hermitian operator $W$ is a witness for states possessing negative conditional entropy, it is sufficient to prove (i) $\exists \, \rho \notin \Free  \mid \Tr(W \rho) < 0$, and (ii) $\forall \, \sigma \in \Free   , \, \, \Tr(W \sigma) \geq 0$.
\begin{theorem}
	\label{witness}
	A witness for any state $ \rho_{AB} \notin {\cal F_{\mathrm v} (H)} $ is\\
	$ W_\rho  = - \log(\rho_{AB}) + I \otimes \log(\rho_B).$
\end{theorem}
\begin{proof}
    $ W_\rho $ is Hermitian due to the Hermiticity of $\log(\rho_{AB})$ and $\log(\rho_B)$. To prove (i), we have $\Tr(W_\rho \rho_{AB}) = \cent(\rho_{AB}) < 0$. To prove (ii), using the monotonicity of the relative entropy \cite{nielsen, wilde} we have $\forall \, \sigma_{AB} \in \Free, \ent(\sigma_B || \rho_B) \leq \ent(\sigma_{AB} || \rho_{AB})$. On substituting the expression for relative entropy, we have $ \Tr(\sigma_B \log \sigma_B) -  \Tr(\sigma_B \log \rho_B) \leq \Tr(\sigma_{AB} \log \sigma_{AB}) - \Tr(\sigma_{AB} \log \rho_{AB})$. On rearranging, we have $-\Tr(\sigma_{AB} \log \rho_{AB}) +  \Tr(\sigma_B \log \rho_B) \geq \cent(\sigma_{AB}) \geq 0$ which implies that $\Tr(W_\rho \sigma_{AB}) \geq 0$. 
Hence, $W_\rho$ is a witness operator. 
\end{proof}


The witness given in Theorem \ref{witness} is closely related to the conditional amplitude operator \cite{negentropy, condamp, cond_geo}. Note that this witness can only be constructed for full-rank density matrices \footnote{This is because $\log 0$ is undefined}. In Sec \ref{sec:geo}, we have also given an alternative prescription of a witness exploiting geometrical considerations that can be constructed for any state. We also note that it is possible to construct witnesses employing the uncertainty principle in the presence of quantum memory \cite{ur1}. We now provide examples of witness construction using Theorem \ref{witness}. 

\subsection{Examples}
\subsubsection{Werner states}
Let us consider the Werner state \cite{werner} $\orig=0.99 | \phi^+ \rangle \langle \phi^+| + 0.01\frac{I}{4}$. According to the prescription in Theorem \ref{witness} we construct a Hermitian operator $W_\orig = - \log(\orig_{\A\B}) + I \otimes \log(\orig_\B)$ which is given by
\begin{equation}
\label{wernerwit}
W_\orig = \begin{bmatrix}
a & 0 & 0 & c \\
0 & b & 0 & 0 \\
0 & 0 & b & 0 \\
c & 0 & 0 & a \\
\end{bmatrix},
\end{equation}
where $a \approx3.3274$, $b \approx 7.6439 $ and $c \approx-4.3165 $. 
We thus obtain
\begin{equation}
\label{expectation_value_werner}
\Tr(W_\orig \orig) \approx -0.9244.
\end{equation}
We observe that the witness gives us a negative value, indicating that $\orig $ possesses negative conditional entropy. 
Consider another state that $W_\orig$ witnesses: $\test=0.9 | \phi^+ \rangle \langle \phi^+| + 0.1\frac{I}{4}$. We have 
\begin{equation}
    \label{rho_test}
    \Tr(W_\orig \test) \approx -0.3417.
\end{equation} 
Once again, this certifies the negative conditional entropy. For implementation in a laboratory, it is important that the witness is decomposed in terms of local observables \cite{guhne}. If we consider the witness (\ref{wernerwit}), then we obtain the decomposition in terms of Pauli matrices given by
\begin{equation}
W_\orig= \frac{a+b}{2} I \otimes I + \frac{a-b}{2} Z \otimes Z + \frac{c}{2} X \otimes X - \frac{c}{2} Y \otimes Y.
\label{decomposition}
\end{equation}
For example, if we have polarized photons, one may take $|H\rangle = |0\rangle$, $|V\rangle = |1\rangle$, $|D\rangle = \frac{|H\rangle + |V\rangle}{\sqrt 2}$, $|F\rangle = \frac{|H\rangle - |V\rangle}{\sqrt 2}$, $|L\rangle = \frac{|H\rangle + i|V\rangle}{\sqrt 2}$, $|R\rangle = \frac{|H\rangle - i|V\rangle}{\sqrt 2}$. Using this basis, the witness can be decomposed as follows, where the values of $a$, $b$ and $c$ are defined in Equation \ref{wernerwit}:

\begin{equation}
\begin{aligned}
W_\orig = a \; \big(|HH\rangle \langle HH| + |VV\rangle \langle VV|\big)  \\
+ \; b \; \big(|VH\rangle \langle VH| + |HV \rangle \langle HV| \big)\\ 
+ \; c \; \big(|DD\rangle \langle DD| + |FF\rangle \langle FF| \\ 
- \; |RR\rangle \langle RR| - |LL \rangle \langle LL| \big).
\end{aligned}
\end{equation}
\subsubsection{Isotropic states}
Isotropic states of dimensionality $d$ are states of the form:
\begin{equation}
\rho_{Iso}=\alpha | \Phi^+ \rangle \langle \Phi^+ | +  (1 - \alpha)\frac{I}{d^2},
\end{equation}
where $ | \Phi^+ \rangle$ is a maximally entangled state of two qudits:
\begin{equation}
\begin{split}
|\Phi^+ \rangle = \frac{1}{\sqrt d} \sum_{i=1}^d |i \rangle \otimes |i \rangle, \\
\alpha \in \left[ - \frac{1}{d^2-1}, 1\right].
\end{split}
\end{equation}
The witness for such states is given by: 
\begin{equation}
\begin{aligned}
    W_{\rho_{Iso}} &= - \log(\rho_{Iso}) + I \otimes \log(\rho_{Iso_B}) \\
&= - \log(\rho_{Iso}) + I \otimes \log \left(\frac{I}{d}\right) \\
&= -\log(\rho_{Iso}) - \log(d).I .\\
\end{aligned}
\end{equation}
Here, we take an example of a $3 \times 3 $ system. We construct the analytical witness using the following state $\rho_{i} = 0.8 | \phi^+ \rangle \langle \phi^+| + 0.2 \;\frac{I}{9}$. The witness is given by

\begin{figure}
	\includegraphics[width=\linewidth]{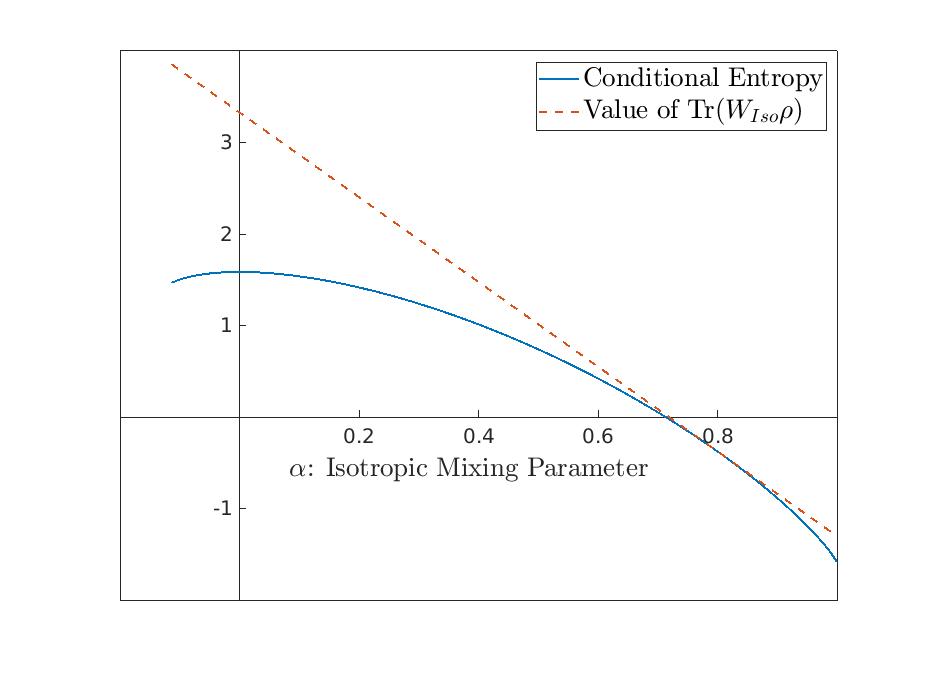}
    \caption{The $x-$axis is $\alpha$, the isotropic mixing parameter, and refers to three dimensional states of the form $\rho =  \alpha | \phi^+ \rangle \langle \phi^+ | +  (1 - \alpha)\frac{I}{d^2}$. The $y-$axis represents different quantities for the red (dashed) line, and the blue (solid) line, placed together for comparison. For the red (dashed) line, it represents the value of $\Tr(W_{\rho_i} \rho)$ and for the blue (solid) line, it represents the conditional von Neumann entropy in bits. Notice that the witness gives a positive value whenever the Conditional von Neumann Entropy is positive.}
	\label{figisotropic}
\end{figure}

\begin{equation}
\begin{aligned}
W_{\rho_i} = \;
&a \big(|00\rangle \langle 00| + |11\rangle \langle 11| + |22 \rangle \langle 22| \big)  \\
&+ b \big(|01 \rangle \langle 01| + |02\rangle \langle 02| + |10 \rangle \langle 10 | \\
&+ |12 \rangle \langle 12| + |20 \rangle \langle 20 | + |21 \rangle \langle 21 | \big)\\
&+ c \big(|00 \rangle \langle 11| + |00 \rangle \langle 22| + |11\rangle \langle 00 |  \\
& + |11\rangle \langle 22| + |22 \rangle \langle 00| + |22 \rangle \langle 11| \big), \\
\end{aligned}
\end{equation}
where $a \approx 2.1704 , b \approx 3.9069
\text{ and } c \approx -1.7365$.

We test the witness against $\rho_i$ and notice that it is negative as expected:
\begin{equation}
\Tr(W_{\rho_i}  \rho_i) \approx -0.3764.
\end{equation}
The witness does not give a negative value for all matrices with negative conditional entropy. Consider the state $\rho_t = 0.715|\phi^+ \rangle \langle \phi^+ | + 0.285 \frac{I}{9}$ which has negative conditional entropy. However, the witness above (formed from the state $\rho_i$) gives a positive value:
\begin{equation}
\Tr(W_{\rho_i}  \rho_t) \approx 0.0172.
\end{equation}
The behaviour of the above witness is shown in Fig. \ref{figisotropic} . This witness can be used to detect negative conditional entropy in a lab through the decomposition in terms of Gell Mann matrices, which are:

\begin{equation}
\begin{split}
&I = \begin{pmatrix} 1 & 0 & 0 \\ 0 & 1 & 0 \\ 0 & 0 & 1 \end{pmatrix},
\lambda_1 = \begin{pmatrix} 0 & 1 & 0 \\ 1 & 0 & 0 \\ 0 & 0 & 0 \end{pmatrix},
\lambda_2 = \begin{pmatrix} 0 & -i & 0 \\ i & 0 & 0 \\ 0 & 0 & 0 \end{pmatrix} \\
&\lambda_3 = \begin{pmatrix} 1 & 0 & 0 \\ 0 & -1 & 0 \\ 0 & 0 & 0 \end{pmatrix},
\lambda_4 = \begin{pmatrix} 0 & 0 & 1 \\ 0 & 0 & 0 \\ 1 & 0 & 0 \end{pmatrix},
\lambda_5 = \begin{pmatrix} 0 & 0 & -i \\ 0 & 0 & 0 \\ i & 0 & 0 \end{pmatrix} \\
&\lambda_6 = \begin{pmatrix} 0 & 0 & 0 \\ 0 & 0 & 1 \\ 0 & 1 & 0 \end{pmatrix},
\lambda_7 = \begin{pmatrix} 0 & 0 & 0 \\ 0 & 0 & -i \\ 0 & i & 0 \end{pmatrix},
\lambda_8 = \frac{1}{\sqrt{3}} \begin{pmatrix} 1 & 0 & 0 \\ 0 & 1 & 0 \\ 0 & 0 & -2 \end{pmatrix}. \\
\end{split}
\end{equation}
The above witness $W_{\rho_i}$ can easily be decomposed into the Gell-Mann basis, and is infact diagonal in the basis:
\begin{equation}
\begin{aligned}
W_{\rho_i} = -a\lambda_1 \otimes \lambda_1 + a\lambda_2 \otimes \lambda_2 - a\lambda_3 \otimes \lambda_3 \\ -a\lambda_4 \otimes \lambda_4 + a\lambda_5 \otimes \lambda_5 - a\lambda_6 \otimes \lambda_6  \\
+ a\lambda_7 \otimes \lambda_7 - a\lambda_8 \otimes \lambda_8 + bI_3 \otimes I_3,\\
\end{aligned}
\end{equation}
where $a \approx 0.86825 \text{ and } b \approx 3.3281$. This facilitates the identification of negativity of conditional entropy in the lab.

\section{Upper bound on conditional entropy}
\label{sec:upper}
We now use the expectation value of the witness to quantify the amount of conditional entropy in a state. To do so, we define $T_M^\chi({\cal H}) := \{ \cent(\sigma) \mid \sigma \in \All \land \Tr(M \sigma) = \chi \}$ where $M$ is a Hermitian operator and $\chi \in \mathbb{R}$. The ${\cal H}$ is henceforth omitted as there is no ambiguity. In words, $T_M^\chi$ is the set of conditional entropies of all states in which the expectation value of $M$ is $\chi$. Obtaining a tight upper bound for $T_M^\chi$ is equivalent to asking the question: what is the maximum value of conditional entropy that an unknown state can possess, given that the expectation value of $M$ in the state is $\chi$? For several applications in quantum information, as described at the end of this paper, a state is useful only if its conditional entropy is less than a certain value. Thus, an upper bound of $T_M^\chi$ could aid in the decision of whether to use a state for a given application or to discard it. In the following theorem, we prove that for a witness $W_\rho$ constructed using Theorem \ref{witness}, an upper bound of $T_{W_\rho}^\chi$ is $\chi$ itself. However, this upper bound need not be tight. We then pose the problem of obtaining a tight upper bound as a convex optimization problem, solve it numerically and discuss some interesting results. 

\begin{theorem}
	\label{upper_bound}
    $\chi$ is an upper bound of $T_{W_\rho}^\chi$.
\end{theorem}
\begin{proof}
    Let $x \in T_{W_\rho}^\chi$. Therefore, $\exists \, \sigma_{AB} \in \All$ such that $x = \cent(\sigma_{AB})$ and $\Tr(W_\rho \sigma_{AB}) = \chi$.
    Once again, using the monotonicity of relative entropy \cite{nielsen, wilde}, we have $ \ent(\sigma_B || \rho_B)  \leq \ent(\sigma_{AB} || \rho_{AB})$. On substituting for relative entropy, we have $\Tr(\sigma_B \log \sigma_B) -  \Tr(\sigma_B \log \rho_B) \leq \Tr(\sigma_{AB} \log \sigma_{AB}) - \Tr(\sigma_{AB} \log \rho_{AB})$. On rearranging, we have $ \cent(\sigma_{AB}) \leq \Tr(W_\rho \sigma_{AB})$ which implies that $x \leq \chi$.      
\end{proof}

As an example, we observe from Equation \ref{rho_test} that  $\Tr(W_\orig \test) \approx -0.3417$. According to Theorem \ref{upper_bound}, this value should serve as an upperbound to $\cent(\test)$. Indeed, we compute that $\cent(\test)$ is approximately $-0.4968$. 

In general, $\chi$ is not a tight upper bound for $T_{W_\rho}^\chi$. However when $\chi = \Tr(W_\rho \rho) = \cent(\rho)$, We have $\chi \in T_{W_\rho}^\chi$ while also being an upper bound to $T_{W_\rho}^\chi$, and is therefore a tight upper bound. From Equation \ref{expectation_value_werner} we observe that $-0.9244$ is the tight upperbound to $T_{W_\orig}^{-0.9244}$.

A tight upper bound for $T_M^\chi$ can be obtained in general by solving the following convex optimization problem for all $\sigma \in \All $: 
\begin{equation}
    \label{optimization}
\begin{aligned}
\underset{\sigma}{\operatorname{maximize}} \quad &
\cent(\sigma)\\
\textrm{subject to} \quad & \Tr(M \sigma) = \chi.\\
\end{aligned}
\end{equation} For a small number of qubits, this is easy to solve numerically \cite{cvx_1, cvx_2, quantinf, cvxquad}. We simultaneously describe the procedure as well as results of a numerical experiment to solve this optimization problem for $M = W_\orig$ defined in Equation \ref{wernerwit}. For our numerical experiment, we denote the 2-qubit Werner states with $\sigma_\alpha \equiv \alpha | \phi^+ \rangle \langle \phi^+ | + (1-\alpha) I/4$ where $\alpha \in [0,1]$.  The $x-$axis in Figure \ref{fig:werner_upper_bound} is the value of $\alpha$ for the Werner states. First, for each $\alpha$, we compute the value of $\chi(\alpha) = \Tr(W_\orig \sigma_\alpha)$, which is plotted with a red dashed line. For each $\chi(\alpha)$, we then solve the optimization problem (\ref{optimization}) to obtain the tight upper bound of $T_{W_\orig}^{\chi(\alpha)}$, which is plotted with a blue solid line. We then compute the value of $\cent(\sigma_\alpha)$, which is also the blue solid line. We observe two interesting results: Firstly, the value of the tight upper bound coincides with the value of conditional entropy for the Werner states. That is, for this particular witness $W_\orig$, among all 2-qubit states in which the expectation value of $W_\orig$ is $\chi(\alpha)$, the corresponding Werner state $\sigma_\alpha = \alpha | \phi^+ \rangle \langle \phi^+ | + (1-\alpha) I/4$ possesses the maximum conditional entropy. The second interesting aspect is that the convex optimization enhances the usefulness of the witness. We observe from the graph in Figure \ref{fig:werner_upper_bound} that there exist states where a positive $\chi(\alpha)$ is obtained, but the tight upper bound of $T_{W_\orig}^{\chi(\alpha)}$ is negative, certifying that the state indeed possesses negative conditional entropy.  

\begin{figure}
    \includegraphics[width=\linewidth]{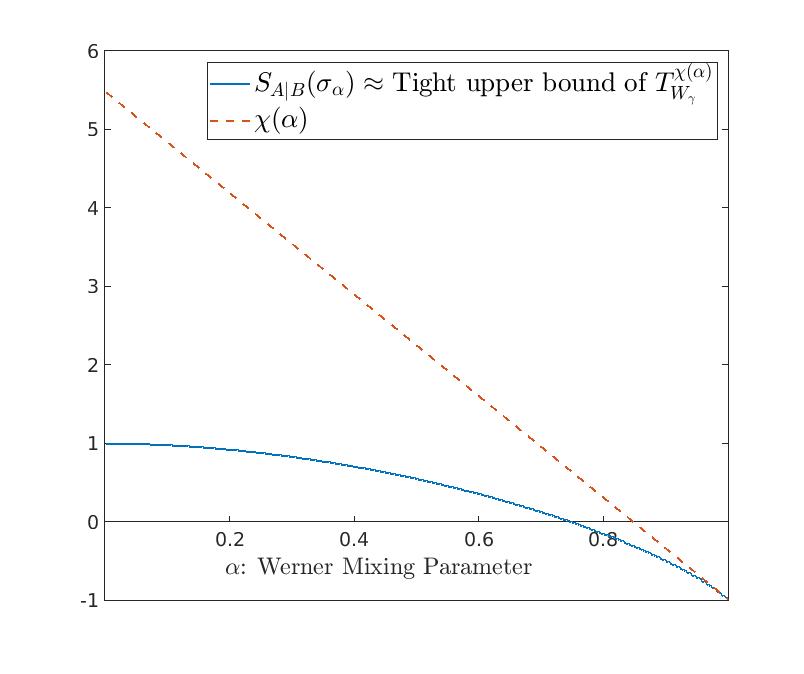}
    \caption{Conditional entropy of Werner states and the obtained tight upper bound: The $x-$axis refers to the parameter $\alpha$ for states of the form $\sigma_\alpha = \alpha | \phi^+ \rangle \langle \phi^+ | + (1-\alpha) I/4$. The $y-$axis represents different quantities for the red (dashed) line and the blue (solid) line, placed together for comparison. For the red (dashed) line, it represents the value of $\chi(\alpha) = \Tr(W_\orig \sigma_\alpha)$. For the blue (solid) line, it represents $\cent(\sigma_\alpha)$ in bits as well the tight upper bound of $T_{W_\gamma}^{\chi(\alpha)}$, which happen to coincide. }
\label{fig:werner_upper_bound}
\end{figure}

This optimization procedure is not limited to witnesses, but rather, any measurement. It can be used to identify the presence of negative conditional entropy, and therefore also entanglement. It is known that $\left(|\phi^+\rangle \langle \phi^+|\right)^{T_A}$ where $(.)^{T_A}$ is the partial transpose with respect to the first subsystem, is an entanglement witness \cite{partial_transpose}. Consider the operator $E = \left(0.3 |\phi^+\rangle \langle \phi^+| + 0.7 I/4 \right)^{T_A}$. It is a Hermitian operator, but not a witness as it is positive definite. We find that the tight upper bound of $T_E^{0.05}$ is approximately  $-0.4541$, thus immediately certifying negative conditional entropy (and therefore also entanglement) for any state in which the expectation value of $E$ is $0.05$. We now give an alternate construction of a witness for conditional entropy.

\section{Geometric witness for negative conditional entropy}
\label{sec:geo}
\subsection{Construction of the witness}
The witness constructed in Theorem \ref{witness} can only be constructed for states corresponding to full-rank matrices. In this section, we discuss a witness construction that is valid for any state. We show that an operator constructed in the same fashion as one that witnesses entangled states from the class of separable states(\cite{bertlmann}, \cite{ref_14}, \cite{ref_25}) in fact acts as a witness for a state outside any convex and compact set, and thus also for $\Free$. In the proof that follows, $\|.\|$ refers to the Frobenius norm and $\langle A, B \rangle = \Tr(A^\dagger B)$. Also, arguments to $\langle A, B \rangle$ are Hermitian, hence $\langle A, B \rangle$ is always real. We have $\|A\| =  \sqrt{\langle A, A \rangle}$. 

\begin{theorem}
  \label{numerical_witness_theorem}
  A witness operator $W^g$ separating state $\rho_s$ from any convex and compact set $S \subseteq \All$ is given by,
  \[ W^g = \frac{\Tr(\sigma_c \rho_s - \sigma_c^2)I + \sigma_c - \rho_s}{\sqrt {\Tr(\sigma_c - \rho_s)^2}}, \]
  where 
  \[\sigma_c = \argmin_{\sigma} || \rho_s - \sigma|| \text{ }\forall \text{ } \sigma \in \text{S}.\]
Since $\Free$ is convex and compact, it follows that $W^g$ acts as a witness for a state outside $\Free$.
\end{theorem}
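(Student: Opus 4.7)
The plan is to verify the three defining properties of a witness for $W_g$ in turn, with the heart of the argument being a variational / hyperplane-separation argument that exploits the minimality of $\sigma_c$ in $S$. I would first note that $W_g$ is manifestly Hermitian, since $I$, $\sigma_c$, $\rho_s$ are Hermitian and the scalar coefficients are real, so the only substantive tasks are (a) showing $Tr(W_g\sigma)\ge 0$ for every $\sigma\in S$, (b) showing $Tr(W_g\rho_s)<0$, and (c) inferring the existence of a negative eigenvalue of $W_g$ from (b).

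The key lemma I would establish is the first-order optimality condition for the closest point: for every $\sigma\in S$,
\[
\langle \sigma_c-\rho_s,\;\sigma-\sigma_c\rangle \;\ge\; 0.
\]
To derive this, I would pick an arbitrary $\sigma\in S$, use convexity to form $\sigma_t=(1-t)\sigma_c+t\sigma\in S$ for $t\in[0,1]$, and expand $\|\rho_s-\sigma_t\|^2\ge\|\rho_s-\sigma_c\|^2$ using $\langle A,B\rangle=Tr(A^\dagger B)$ and Hermiticity. Cancelling the common term, dividing by $t>0$ and letting $t\to 0^+$ isolates the desired linear inequality. This is the standard projection-onto-a-convex-set argument transferred to the Hilbert space of Hermitian operators with the Frobenius inner product; I expect this to be the most delicate step, because one must check that the argument does not need differentiability of anything beyond the quadratic $\|\cdot\|^2$, and that the direction $\sigma-\sigma_c$ is genuinely feasible, which convexity guarantees.

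Next I would plug a generic density operator $\sigma\in S$ (so $Tr(\sigma)=1$) into $Tr(W_g\sigma)$ and simplify. The identity-term contributes $Tr(\sigma_c\rho_s)-Tr(\sigma_c^2)=-Tr((\sigma_c-\rho_s)\sigma_c)$, and the remaining piece is $Tr((\sigma_c-\rho_s)\sigma)$ (up to a sign from the $\sigma_c-\rho_s$ ordering). Combining these one obtains
\[
Tr(W_g\sigma)\;=\;\frac{\langle \sigma_c-\rho_s,\;\sigma-\sigma_c\rangle}{\sqrt{Tr(\sigma_c-\rho_s)^2}},
\]
which is nonnegative by the lemma, settling property (a). Substituting $\sigma=\rho_s$ in the same computation gives $\langle \sigma_c-\rho_s,\rho_s-\sigma_c\rangle=-Tr((\sigma_c-\rho_s)^2)$, so $Tr(W_g\rho_s)=-\sqrt{Tr(\sigma_c-\rho_s)^2}<0$ since $\rho_s\notin S$ forces $\sigma_c\ne\rho_s$; this yields property (b).

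Finally, for (c) I would observe that $\rho_s\succeq 0$ together with $Tr(W_g\rho_s)<0$ forces $W_g$ to have at least one strictly negative eigenvalue (otherwise $W_g\succeq 0$ would give $Tr(W_g\rho_s)\ge 0$). The theorem's conclusion for CVENN then follows immediately by invoking the convexity and compactness of CVENN established earlier, specialising $S=\mathrm{CVENN}$ and $\rho_s$ to any state outside it. The main obstacle, as noted, is the variational lemma; the rest is a bookkeeping computation using $Tr(\sigma)=1$ and the definition of the Frobenius inner product.
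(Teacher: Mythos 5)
Your proposal is correct, and it proves the same key inequality as the paper --- the supporting-hyperplane condition $\langle \sigma_c-\rho_s,\,\sigma-\sigma_c\rangle\ge 0$ for all $\sigma\in S$, which after the bookkeeping with $Tr(\sigma)=1$ is exactly $Tr(W_g\sigma)\ge 0$ --- but you reach that inequality by a different route. The paper argues by contradiction: it assumes some $\sigma'\in S$ violates the inequality and then splits into two cases according to the sign of $\langle\rho_s-\sigma',\sigma_c-\sigma'\rangle$, showing in Case 1 that $\sigma'$ itself is closer to $\rho_s$ than $\sigma_c$, and in Case 2 constructing an explicit convex combination $\sigma''=\lambda'\sigma_c+(1-\lambda')\sigma'$ with a specific $\lambda'\in(0,1)$ that reduces to Case 1. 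You instead use the standard projection-onto-a-convex-set argument: expand $\|\rho_s-\sigma_t\|^2\ge\|\rho_s-\sigma_c\|^2$ along the chord $\sigma_t=(1-t)\sigma_c+t\sigma$, divide by $t$, and let $t\to 0^+$. Your derivation is shorter and avoids the case split entirely (the paper's $\lambda'$ is essentially the minimizer of the quadratic along the chord, which your limiting argument sidesteps), at the cost of being slightly less self-contained geometrically; the paper's version makes the ``obtuse versus acute angle'' picture explicit, which it uses for its figures. You also add a step the paper leaves implicit: deducing from $Tr(W_g\rho_s)<0$ and $\rho_s\succeq 0$ that $W_g$ must have a negative eigenvalue, which is needed to meet the stated definition of a witness. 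One point worth making explicit in either treatment is that compactness of $S$ is what guarantees the minimizer $\sigma_c$ exists (and $\rho_s\notin S$ is what guarantees $\sigma_c\neq\rho_s$, so the normalization $\sqrt{Tr(\sigma_c-\rho_s)^2}$ is nonzero); your proof quietly relies on both, as does the paper's.
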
\begin{proof}
  \label{numerical_witness_proof}
  To find a witness operator $W^g$ separating a state $\rho_s$ from $S$, given that $\sigma_c$ is the closest state in $S$ to $\rho_s$, we proceed as follows: We consider a function $\func(\chi)$ = $\langle \chi - \sigma_c, \sigma_c - \rho_s \rangle$ defined for all density matrices $\chi$. We show that (i) $\forall \; \sigma \in S, \; \func(\sigma) \geq 0$ and (ii) $\exists \; \rho \notin S$ such that $\func(\rho) < 0$. We then find the $W^g$ such that $\Tr(W^g \chi) = \func(\chi) \; \forall  \; \chi$, and thus find the witness $W^g$.

(i) To prove that the $\func(\sigma)$ is non-negative for all $\sigma \in S$, we assume otherwise:

\begin{equation}
  \label{contradiction}
  \exists \sigma' \in S \;|\; \langle \sigma' - \sigma_c, \sigma_c - \rho_s \rangle < 0.
\end{equation}
Within this assumption, we consider two cases. Intuitively, these two cases correspond to having an (i) obtuse or right angle
between $\sigma_c - \sigma'$ and $\rho_s - \sigma'$ and having an (ii) acute angle between them as shown in Fig \ref{fig:obtuse} and Fig \ref{fig:acute}.

\emph{Case 1:} $\langle \rho_s - \sigma', \sigma_c - \sigma' \rangle \leq 0$. Here, we show that $\sigma'$ is closer to $\rho_s$ than $\sigma_c$, hence leading
to a contradiction. Consider the expression,
\begin{equation}
  \langle \rho_s - \sigma_c, \rho_s - \sigma_c \rangle - 
\langle \rho_s - \sigma', \rho_s - \sigma' \rangle,
\end{equation}
which can be rewritten as follows:
\begin{equation}
    \label{case_1_manip}
    \begin{aligned}
        & \langle \rho_s - \sigma_c, \rho_s - \sigma_c \rangle - \langle \rho_s - \sigma', \rho_s - \sigma' \rangle                                                        \\
    = & \langle \rho_s - \sigma_c, \rho_s - \sigma_c \rangle - \langle \rho_s - \sigma' + \sigma_c - \sigma_c, \rho_s - \sigma' \rangle                                      \\
    = & \langle \rho_s - \sigma_c, \rho_s - \sigma_c \rangle - \langle \rho_s - \sigma_c, \rho_s - \sigma' \rangle - \langle \sigma_c - \sigma', \rho_s - \sigma' \rangle       \\
    = & \langle \rho_s - \sigma_c, \rho_s - \sigma_c - \rho_s + \sigma'\rangle - \langle \sigma_c - \sigma', \rho_s - \sigma' \rangle                                       \\
    = & \langle \rho_s - \sigma_c, \sigma' - \sigma_c \rangle - \langle \sigma_c - \sigma', \rho_s - \sigma'\rangle                                                     \\
    = & \langle \rho_s - \sigma_c + \sigma' - \sigma', \sigma' - \sigma_c \rangle - \langle \sigma_c - \sigma', \rho_s - \sigma'\rangle                                   \\
    = & \langle \rho_s - \sigma', \sigma' - \sigma_c \rangle + \langle \sigma' - \sigma_c, \sigma' - \sigma_c \rangle - \langle \sigma_c - \sigma', \rho_s - \sigma' \rangle \\
    = & \langle \sigma' - \sigma_c, \sigma' - \sigma_c \rangle - 2\langle \rho_s - \sigma', \sigma_c - \sigma' \rangle.
    \end{aligned}
\end{equation}
We know that $\sigma' \neq \sigma_c$, as that would make the initial 
assumption \ref{contradiction} untrue. Hence, the first term is always 
positive by definition of inner product and the second term is non-positive by the assumption in 
Case 1.

Hence, we have $\langle \rho_s - \sigma_c, \rho_s - \sigma_c \rangle - \langle \rho_s - \sigma', \rho_s - \sigma' \rangle > 0$ or $\sqrt{\langle \rho_s - \sigma_c, \rho_s - \sigma_c \rangle} > \sqrt{\langle \rho_s - \sigma', \rho_s - \sigma' \rangle}$. Therefore, $||\rho_s - \sigma_c || > ||\rho_s - \sigma' ||$, which is a contradiction.

\emph{Case 2:}  $\langle \rho_s - \sigma', \sigma_c - \sigma' \rangle > 0$.
Here, we show that there will exist a point $\sigma'' \in S$ such that $\langle \rho_s - \sigma'', \sigma_c - \sigma'' \rangle = 0$. Thus, Case 1 is satisfied and following the argument present there, $\sigma''$ will be closer to $\rho_s$ than $\sigma_c$, and a contradiction is attained.

Since, $\sigma' \in S$ and $\sigma_c \in S$, by the convexity of the
set $S$, we have:
\begin{equation}
  \forall \, \lambda \in [0,1],\; \lambda \sigma_c + (1 - \lambda) \sigma' \in S.
\end{equation}
Consider $\lambda'  = 
\frac{\langle \sigma' - \rho_s, \sigma' - \sigma_c \rangle}{\langle \sigma' - \sigma_c, \sigma' - \sigma_c \rangle}$.
By the assumption in Case 2, the numerator is positive, and by the definition of inner product,
the denominator is positive. Therefore $\lambda' > 0$. 
We also have $\lambda'$ = $\frac{\langle \sigma' - \sigma_c + \sigma_c - \rho_s, \sigma' - \sigma_c \rangle}{\langle \sigma' - \sigma_c, \sigma' - \sigma_c \rangle}$ =  $1 - \frac{\langle \sigma' - \sigma_c, \rho_s - \sigma_c \rangle}{\langle \sigma' - \sigma_c, \sigma' - \sigma_c \rangle}$, and
by the initial contradiction assumption \ref{contradiction}, the second term is positive, hence $\lambda' < 1$. 

Therefore, $\lambda' \in (0,1)$. 
Consider a state $\sigma'' = \lambda' \sigma_c + (1 - \lambda') \sigma'$. Notice that this point belongs to the set $S$ and
$\langle \rho_s - \sigma'', \sigma_c - \sigma'' \rangle = 0$, therefore satisfying Case 1.
Thus both cases lead to a contradiction rendering the initial assumption \ref{contradiction} untrue. Thus, all points $\sigma$ in $S$ satisfy $\langle \sigma - \sigma_c, \sigma_c - \rho_s \rangle \geq 0$.

(ii) To show that $\exists \rho \notin S \;|\; \func(\rho) < 0$, we consider the point $\rho_s$ itself, which we wish to separate. It is easy to see that $\func(\rho_s)$ gives a negative value: 
\begin{equation}
    \langle \rho_s - \sigma_c, \sigma_c - \rho_s \rangle = -||\rho_s - \sigma_c ||^2 < 0.
\end{equation}
We find the witness operator $W^g$ such that $\Tr(W^g \chi) = \langle \chi - \sigma_c, \sigma_c - \rho_s \rangle = \Tr((\chi - \sigma_c)(\sigma_c - \rho_s))$. On solving the above,  $W^g = \Tr(\sigma_c \rho_s - \sigma_c^2)I + \sigma_c - \rho_s$. On normalising, we get $W^g$ as mentioned in Theorem \ref{numerical_witness_theorem}. 
\end{proof}


\begin{figure}
  \includegraphics[width=\linewidth]{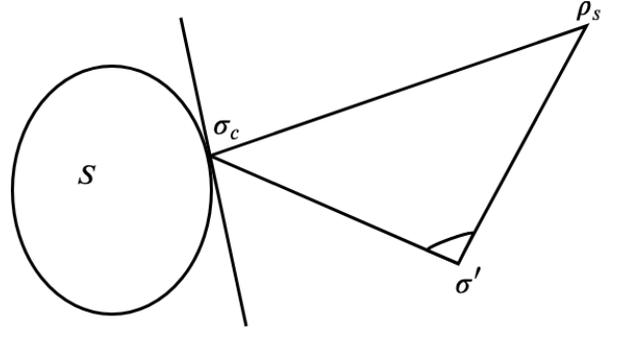}
  \caption{Case 1: Corresponds to angle at $\sigma'$ being obtuse.}
  \label{fig:obtuse}
\end{figure}
\begin{figure}
  \includegraphics[width=\linewidth]{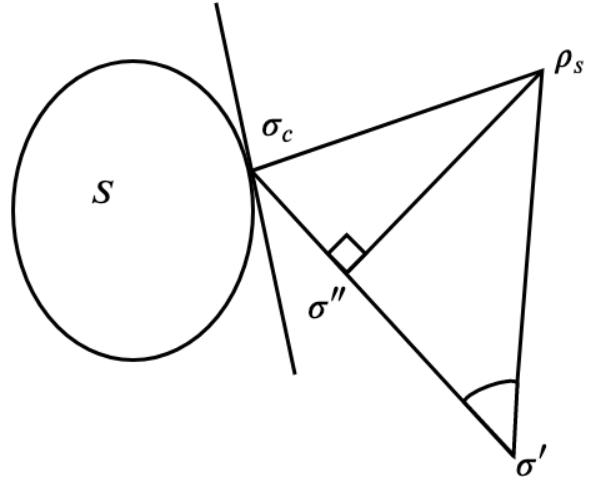}
  \caption{Case 2: Corresponds to the angle at $\sigma'$ being acute.}
  \label{fig:acute}
\end{figure}

\begin{figure}
\centering
  \includegraphics[width=0.9\linewidth]{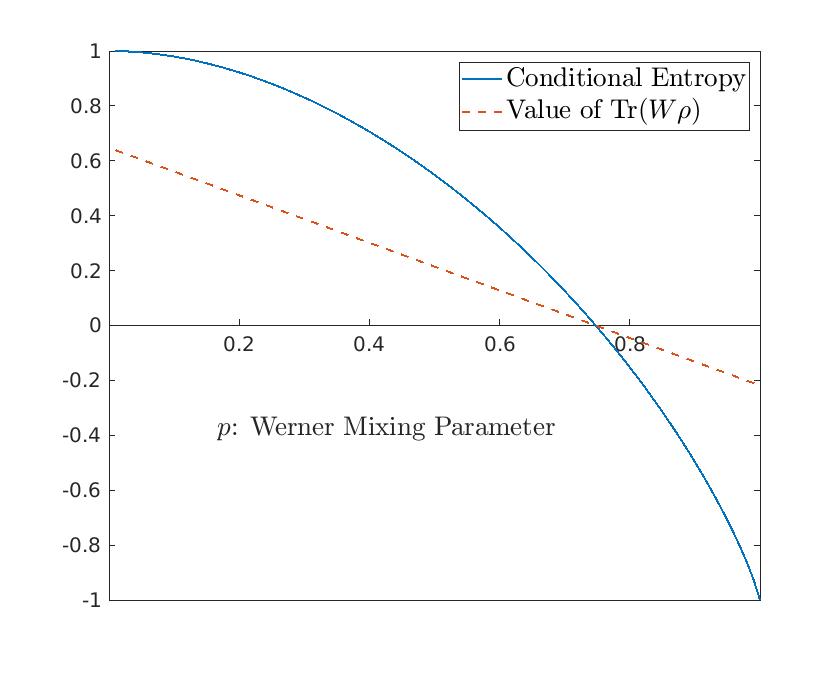}
  \caption{Variation of conditional entropy and
  the value of $\Tr(W\rho)$ with the mixing parameter $p$. The $x-$axis is $p$, the Werner mixing parameter, and refers to two dimensional states of the form $\rho =  p | \phi^+ \rangle \langle \phi^+ | +  (1 - p)\frac{I}{d^2}$. The $y-$axis represents different quantities for the red (dashed) line and the blue (solid) line, placed together for comparison. For the red (dashed) line, it represents the value of $\Tr(W \rho)$ where $W$ is the geometric witness calculated in Eq.\ref{example_witness} and for the blue (solid) line, it represents the conditional entropy in bits.}
  \label{fig:numerical_witness_plot}
\end{figure}

Now, it remains that we find $\sigma_c$, the closest state in $\Free$ to $\rho_s$. Note that trace distance $|| \rho_s - \sigma ||$ is a convex function in $\sigma$ for a fixed $\rho_s$, 
and it needs to be optimised over all $\sigma \in \Free$, a convex set. Also, unlike the entanglement problem, the conditional entropy expression can be evaluated for any given dimension at any state, and membership in $\Free$ can be tested. Thus, this is a convex optimization problem and there exist several solvers to tackle this. We use MATLAB's CVX solver \cite{cvx_1, cvx_2}, a package for specifying and solving convex programs and a few quantum function libraries \cite{cvxquad, quantinf} to find the closest state in $\Free$ for several 2-qubit and 3-qubit states.

\subsection{Example: Werner class of states}
In the $2 \times 2$ dimensional space, consider the Bell state $|\phi^+ \rangle \langle \phi^+|$. We find that the closest state in $\Free$ to $|\phi^+ \rangle \langle \phi^+|$ is the following:
\[\begin{bmatrix}
  a & 0 & 0 & c \\
  0 & b & 0 & 0 \\
  0 & 0 & b & 0 \\
  c & 0 & 0 & a \\
\end{bmatrix}\]
where $ a \approx 0.4369, b \approx 0.0631, c \approx 0.3738 $.
We note that this is the state where the Werner line $p |\phi^+\rangle \langle \phi^+ | + (1-p)\frac{I}{4}$ intersects $\Free$. Therefore, this state at $p \approx 0.7476$ is the closest state in $\Free$ for all states with negative conditional entropy that lie on the the given Werner line. Hence the witness created using this state witnesses all negative conditional entropy states of the form $p |\phi^+\rangle \langle \phi^+ | + (1-p)\frac{I}{4}$ where $0 \leq p \leq 1$.

Applying Theorem \ref{numerical_witness_theorem}, we find the witness to this state is:
\begin{equation}
\label{example_witness}
W = \begin{bmatrix}
  a & 0 & 0 & c \\
  0 & b & 0 & 0 \\
  0 & 0 & b & 0 \\
  c & 0 & 0 & a \\
\end{bmatrix}
\end{equation}
where $a \approx 0.3588$, $b \approx 0.9361$ and $c  \approx -0.5774$. To check that this operator indeed acts as a witness, consider the state $\rho_w = 0.75 |\phi^+\rangle \langle \phi^+| +  0.25 \frac{I}{4}$ which lies on the Werner line just outside $\Free$:
\begin{equation}
\Tr(W \rho_w) \approx -0.0021
\end{equation}
which is a negative value as it is outside the class, as expected. Fig \ref{fig:numerical_witness_plot} shows the variation of $\Tr(W\rho)$ with mixing parameter $p$, as well as the corresponding conditional entropy. 

The witness operator $W$ can readily be decomposed in the form of Pauli matrices as follows and be implemented in the laboratory for measurement:
\[W = \frac{a+b}{2} I \otimes I + \frac{a-b}{2} Z \otimes Z + \frac{c}{2} X \otimes X - \frac{c}{2} Y \otimes Y\]

\subsection{Proof of weak optimality}
An entanglement witness $A$ is ``weakly optimal" if there exists a separable state $\sigma_s$ such that $\langle \sigma_s, A \rangle = 0$ \cite{weakly_optimal, ref_25}. Similarly, we define weak optimality for witnesses of negative conditional entropy: A witness of negative conditional entropy $W$ is ``weakly optimal" if there exists a state $\sigma_o$ such that (i) $\cent(\sigma_o) \geq 0$ and (ii) $\langle \sigma_o, W \rangle = 0$. To show that $W^g$ is weakly optimal, we use the state $\sigma_c$.  Using the definitions in the proof of Theorem \ref{numerical_witness_theorem}, we have (i) $\cent(\sigma_c) \geq 0$ and (ii):
\[
    \langle \sigma_c, W^g \rangle
    = \Tr(\sigma_c W^g)
    = \func(\sigma_c)
    = \langle \sigma_c - \sigma_c, \sigma_c - \rho_s \rangle
    = 0.
\]
Hence, $W^g$ is a weakly optimal witness for negative conditional entropy. Such a witness is also known as a tangent functional \cite{ref_14, ref_25}. We now discuss some applications of witnessing negative conditional entropy.

\section{Applications}
\label{sec:app}
\subsection{Quantum state merging}
When a quantum state $\rho_{AB}$ is shared between $A$ and $B$, we ask: How much quantum communication is needed  from $A$ to transfer the full information of the state to $B$? This process of merging the missing information from one system to another is called state merging \cite{statemerging1,statemerging2}. The quantification of this process is given by the conditional entropy $\cent(\rho_{AB})$ (if it is from $A$ to $B$) of the system. If $\cent(\rho_{AB})$  is positive, it means that the sender needs to communicate this number of quantum bits to the receiver and if it is $0$, there is no need for such communication. However, when  it is negative, the sender and receiver gain  corresponding potential for  future quantum communication between them even after the merging process. The witness provided by us detects such states, and is therefore useful for the several applications of the quantum state merging primitive detailed in \cite{statemerging2}. 

\subsection{Superdense coding}
If two parties, Alice and Bob share an entangled state, then by sending a qudit to Bob, Alice may communicate more information than possible with a classical dit, which is known as superdense coding. The superdense coding capacity for a mixed state $\rho_{AB}$ in 
is given by \cite{sdc2} \\
\begin{equation}
    C_{AB} = \max\{\log_2d, \log_2d + \ent(\rho_B) - \ent(\rho_{AB})\}.
\end{equation}
Here, $ C_{AB} $ is the amount of classical information that can be sent from  A to B using $\rho_{AB}$ as a quantum resource. Note that when $\cent(\rho_{AB})$ is negative we  can use the shared state to transfer classical communication greater than the classical limit of $\log_2 d$ bits. Therefore, states with negative $\cent(\sigma_{AB})$ give quantum advantage while performing superdense coding and our result is useful in detecting such states. In addition, dense coding capacity also obeys the exclusion principle in the case of multipartite systems in which negative conditional entropy plays an important role \cite{sdc3}.

\subsection{Distributed private randomness distillation} The inherent probabilistic nature of quantum measurement outcomes provides a natural arena for generation of randomness  \cite{randomness_extractors}. The generated randomness is considered private if it is completely unpredictable in advance \cite{private_rand}. For example, measurement of the $|+ \rangle$ state in the computational basis is private as a pure state is not correlated with any other state, including one with a  possible Eavesdropper. However, this may not always be the case. In \cite{randomness_extractors}, the problem of extracting randomness from a state where part of the state is in the possession of the Eavesdropper is considered. In \cite{private_rand}, a distributed setting of the above problem is considered: if two parties, Alice and Bob trust each other and share a state $\rho_{AB}$ and Eve possesses the purification of the state, then what is the maximum rate of randomness, that Alice and Bob ($R_A$ and $R_B$ respectively) can extract private from Eve?

In the case that Alice and Bob cannot communicate with each other and can only perform local unitary operations, while each having access to free noise (maximally mixed states), the bound turns out to be: $R_A \leq \log|A| - \cent(\rho_{AB}), R_B \leq \log|B| - \ent_{B|A}(\rho_{AB})$ and $R_A + R_B \leq R_G$ where $R_G = \log|AB| - \ent(\rho_{AB})$ and $|X|$ is the dimensionality of the quantum system $X$.


Alice and Bob can use the witness provided by us to obtain an upper bound on $\cent(\rho_{AB})$ and $\cent(\rho_{AB})$. For a two qubit system, we have provided a local decomposition of the witness to accomplish this (\ref{decomposition}). Note that classical communication is necessary for Alice and Bob to estimate these upper bounds. 

\subsection{One-way entanglement distillation}
Entanglement distillation is the process of transformation of $N$ copies of an arbitrary entangled state $\rho_{AB}$ into some number of approximately pure EPR pairs, using only local operations and classical communication (LOCC) at optimal rates \cite{distillation}. In one-way entanglement distillation, the goal is also to transform a given shared state into maximally entangled states, albeit in an asymmetric setting as described in \cite{one_way_entangle}. Here, the one-way (or forward) entanglement capacity of $\rho_{AB}$, which is the maximal achievable rate of entanglement distillation, is given by the Hashing inequality \cite{one_way_entangle}: $\hash(\rho) \geq - \cent(\rho)$. The witness provided by us detects states with a positive one-way entanglement capacity, and the subsequent optimization procedure can be utilized to provide a tight lower bound on the same. 

\subsection{Coherent information}
Consider a quantum system $A$ and a reference system $P$. Initially, the quantum system
$A$ and the reference system $P$ are in a pure entangled state,
denoted by $|\psi\rangle _{PA}$.  The density matrix $\rho_A$ of the initial system $A$ is given by $\rho_A=\Tr_P(|\psi\rangle _{PA} \langle \psi |)$. Consider an unknown quantum channel $N$ through which the state $\rho_A$ of the system $A$ is transmitted while the system $P$ remains isolated from the environment and does not get affected by the channel. The combined output state $\rho_{PB}$ is given by $\rho_{PB}= (I_P \otimes N_A)(|\psi\rangle _{PA} \langle \psi |)$ as shown in Fig \ref{fig:qchannel}. 

As a result of this transmission, both the system $A$ and the entanglement between the system $A$ and $P$ get affected. The quantum capacity $\qcap(N)$ of the channel is associated with a correlation measure known as \textit{quantum coherent information} \cite{coherent}. 
The entropy exchange between the initial system $PA$ and the output system $PB$ is a measure of the extent to which the initially pure state $\ket \psi_{PA}$ becomes mixed as a result of the channel, and is defined as $\ent_{E}(\rho_A; N(\rho_{A}))= \ent(\rho_{PB})$. A lower bound for the quantum capacity $\qcap(N)$ is given by the quantum coherent information which is expressed as $\ent(N(\rho_A))- \ent_{E}(\rho_A; N(\rho_{A}))= \ent(\rho_B)- \ent(\rho_{PB})$ \cite{capacity}. The expression $\ent(\rho_B)- \ent(\rho_{PB})$ is the negative of the conditional entropy $\ent_{\text{P}|\text{B}}(\rho_{PB})$. Thus, if we have a negative value of $\ent_{\text{P}|\text{B}}(\rho_{PB})$, then we have a positive value of quantum coherent information, and therefore of $\qcap(N)$. Our witness operator can detect whether the state $\rho_{PB}$ has negative conditional entropy. This implies that we can detect whether the unknown quantum channel $N$ is useful in transmitting quantum information.

\begin{figure}[h]
\centering
\includegraphics[scale=.5]{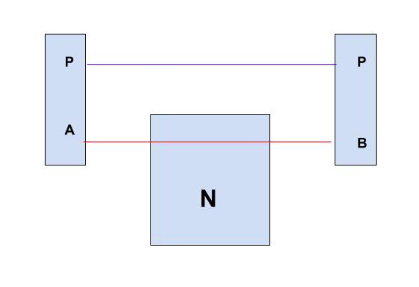}
\caption{ The dynamics of the system $A$ when sent through the quantum channel $N$  with $P$ as the reference system. $B$ is the output of the system $A$ after the channel transmission.}
\label{fig:qchannel}
\end{figure}
\subsection{Uncertainty relations}
The entropic uncertainty relation \cite{ur} for two incompatible measurements $X$ and $Y$ performed on a quantum state is given by  \cite{complementary, generalized_uncertainty}: 
\begin{equation}
    \label{uncertainty_1}
    \sent(X) + \sent(Y) \geq \log_2{\frac{1}{c}}.
\end{equation}
Here, $\sent(X)$ and $\sent(Y)$ denote the Shannon entropy of the outcomes of measurement $X$ and $Y$, respectively when performed on any quantum state, and $c=\max_{i,j}|\langle x_i| y_j \rangle|^2$ where $|x_i\rangle,|y_j\rangle$ are eigenvectors of $X$ and $Y$, respectively.

The operational interpretation of Eq.~\ref{uncertainty_1} is as follows: Alice and Bob agree on two measurements $X$ and $Y$. Bob prepares a state and sends it to Alice who measures, according to her choice, $X$ or $Y$ and reveals her measurement choice ($X$ or $Y$). Bob then guesses her outcome. Bob's aim is to prepare the initial state such that he reduces the uncertainty about the outcome revealed by Alice. By Eq.~\ref{uncertainty_1}, whatever state Bob prepares, $\sent(X) + \sent(Y)$ will be at least $\log_2 \frac{1}{c}$.

However, note that if Bob has access to quantum memory, he can prepare a state that is maximally entangled with the memory and send it to Alice, then apply the same measurement that Alice picks on his own memory, and can guess Alice's outcome with full certainty. Hence, the above inequality does not hold in the presence of quantum memory. A new uncertainty relation taking into consideration quantum memory \cite{ur1} is the following:
\begin{equation}
    \label{uncertainty_2}
    \osent(X|B) + \osent(Y|B) \geq \log_2 \frac{1}{c} + \cent(\rho_{AB}),
\end{equation}
where $\rho_{AB}$ is the combined state of Bob's quantum memory and the subsystem he sends to Alice. $\osent(X|B)$ (or $\osent(Y|B)$) quantifies the uncertainty about the outcome of measurement $X$ (or $Y$), given information stored in a quantum memory $B$ \cite{ur1}.

Here we notice that whenever $\cent(\rho_{AB})$ is negative, by Eq.~\ref{uncertainty_2}, Bob's prepared state beats the original uncertainty bound of Eq.~\ref{uncertainty_1}. In other words, by our characterization of $\Free$, we make it possible to experimentally witness states that have the ability to utilize the available quantum memory in overcoming the original uncertainty bound. Among these states, are the states with $\cent(\rho_{AB}) \leq \log_2c$, for which the uncertainty bound becomes trivial:  $\osent(X|B) + \osent(Y|B) \geq 0$. Fig.~\ref{fig:quantum_memory} depicts this classification of states for a given measurement set $\{X, Y\}$.

\begin{figure}
    \includegraphics[width=0.6\linewidth]{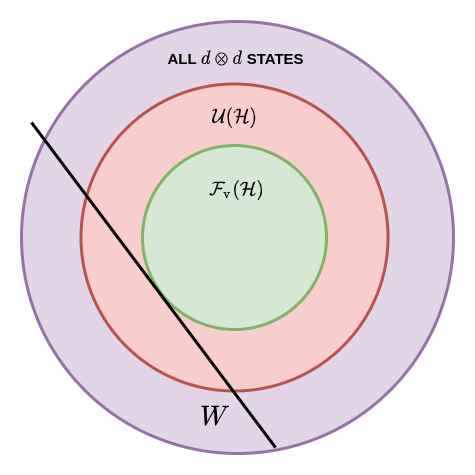}
    \caption{The inner set represents $\Free$ --- the set of states that give no advantage with quantum memory over the original scenario. Beyond it, labelled $\cal U (\cal H)$, are the set of states that do give an advantage, but still have a degree of uncertainty. And beyond those (outermost) are the states with $\cent(\rho_{AB}) \leq \log_2c$, for which the uncertainty bound is trivial.}
\label{fig:quantum_memory}
\end{figure}

\section{Conclusion}
\label{sec:conclusion}
In the present work, we prove that the states having non-negative conditional entropy form a convex and compact set. This allows for the construction of witnesses to detect states with negative
conditional entropy. We provide two constructions of such witnesses and for one of them, we prove that the expectation value of the witness in a state is an upper bound to the conditional entropy of the state. We pose the problem of obtaining a tight upper bound to the set of conditional entropies of states in which an operator gives the same expectation value as a convex optimization problem, and discuss several interesting results. We believe that further insight on conditional entropy as a resource distinct from entanglement can be gained by studying it in a resource-theoretic framework. In such a study, the set of states with non-negative conditional entropy characterized in this work would be the set of free states.


\begin{acknowledgments}
We acknowledge Prof. John Watrous, Dr. Luca Innocenti and Mr. Siddharth Bhat for having useful discussions at various stages of completion of this manuscript. We also thank the anonymous referees for their valuable suggestions, which were immensely useful in improving out work. N.G. would like to acknowledge support from the Research Initiation Grant of BITS-Pilani, Hyderabad vide letter no. BITS/GAU/RIG/2019/H0680 dated 22nd April, 2019. 
\end{acknowledgments}

\appendix*
\bibliographystyle{apsrev4-1}

\begin{thebibliography}{42}%
\makeatletter
\providecommand \@ifxundefined [1]{%
 \@ifx{#1\undefined}
}%
\providecommand \@ifnum [1]{%
 \ifnum #1\expandafter \@firstoftwo
 \else \expandafter \@secondoftwo
 \fi
}%
\providecommand \@ifx [1]{%
 \ifx #1\expandafter \@firstoftwo
 \else \expandafter \@secondoftwo
 \fi
}%
\providecommand \natexlab [1]{#1}%
\providecommand \enquote  [1]{``#1''}%
\providecommand \bibnamefont  [1]{#1}%
\providecommand \bibfnamefont [1]{#1}%
\providecommand \citenamefont [1]{#1}%
\providecommand \href@noop [0]{\@secondoftwo}%
\providecommand \href [0]{\begingroup \@sanitize@url \@href}%
\providecommand \@href[1]{\@@startlink{#1}\@@href}%
\providecommand \@@href[1]{\endgroup#1\@@endlink}%
\providecommand \@sanitize@url [0]{\catcode `\\12\catcode `\$12\catcode
  `\&12\catcode `\#12\catcode `\^12\catcode `\_12\catcode `\%12\relax}%
\providecommand \@@startlink[1]{}%
\providecommand \@@endlink[0]{}%
\providecommand \url  [0]{\begingroup\@sanitize@url \@url }%
\providecommand \@url [1]{\endgroup\@href {#1}{\urlprefix }}%
\providecommand \urlprefix  [0]{URL }%
\providecommand \Eprint [0]{\href }%
\providecommand \doibase [0]{http://dx.doi.org/}%
\providecommand \selectlanguage [0]{\@gobble}%
\providecommand \bibinfo  [0]{\@secondoftwo}%
\providecommand \bibfield  [0]{\@secondoftwo}%
\providecommand \translation [1]{[#1]}%
\providecommand \BibitemOpen [0]{}%
\providecommand \bibitemStop [0]{}%
\providecommand \bibitemNoStop [0]{.\EOS\space}%
\providecommand \EOS [0]{\spacefactor3000\relax}%
\providecommand \BibitemShut  [1]{\csname bibitem#1\endcsname}%
\let\auto@bib@innerbib\@empty
\bibitem [{\citenamefont {Horodecki}\ \emph {et~al.}(2009)\citenamefont
  {Horodecki}, \citenamefont {Horodecki}, \citenamefont {Horodecki},\ and\
  \citenamefont {Horodecki}}]{enthorodecki}%
  \BibitemOpen
  \bibfield  {author} {\bibinfo {author} {\bibfnamefont {R.}~\bibnamefont
  {Horodecki}}, \bibinfo {author} {\bibfnamefont {P.}~\bibnamefont
  {Horodecki}}, \bibinfo {author} {\bibfnamefont {M.}~\bibnamefont
  {Horodecki}}, \ and\ \bibinfo {author} {\bibfnamefont {K.}~\bibnamefont
  {Horodecki}},\ }\href {\doibase 10.1103/RevModPhys.81.865} {\bibfield
  {journal} {\bibinfo  {journal} {Rev. Mod. Phys.}\ }\textbf {\bibinfo {volume}
  {81}},\ \bibinfo {pages} {865} (\bibinfo {year} {2009})}\BibitemShut
  {NoStop}%
\bibitem [{\citenamefont {Ganguly}\ \emph {et~al.}(2011)\citenamefont
  {Ganguly}, \citenamefont {Adhikari}, \citenamefont {Majumdar},\ and\
  \citenamefont {Chatterjee}}]{telwit}%
  \BibitemOpen
  \bibfield  {author} {\bibinfo {author} {\bibfnamefont {N.}~\bibnamefont
  {Ganguly}}, \bibinfo {author} {\bibfnamefont {S.}~\bibnamefont {Adhikari}},
  \bibinfo {author} {\bibfnamefont {A.~S.}\ \bibnamefont {Majumdar}}, \ and\
  \bibinfo {author} {\bibfnamefont {J.}~\bibnamefont {Chatterjee}},\ }\href
  {\doibase 10.1103/PhysRevLett.107.270501} {\bibfield  {journal} {\bibinfo
  {journal} {Phys. Rev. Lett.}\ }\textbf {\bibinfo {volume} {107}},\ \bibinfo
  {pages} {270501} (\bibinfo {year} {2011})}\BibitemShut {NoStop}%
\bibitem [{\citenamefont {Cerf}\ and\ \citenamefont
  {Adami}(1997)}]{negentropy}%
  \BibitemOpen
  \bibfield  {author} {\bibinfo {author} {\bibfnamefont {N.~J.}\ \bibnamefont
  {Cerf}}\ and\ \bibinfo {author} {\bibfnamefont {C.}~\bibnamefont {Adami}},\
  }\href {\doibase 10.1103/PhysRevLett.79.5194} {\bibfield  {journal} {\bibinfo
   {journal} {Phys. Rev. Lett.}\ }\textbf {\bibinfo {volume} {79}},\ \bibinfo
  {pages} {5194} (\bibinfo {year} {1997})}\BibitemShut {NoStop}%
\bibitem [{\citenamefont {Abe}\ and\ \citenamefont
  {Rajagopal}(2001)}]{non-additivity}%
  \BibitemOpen
  \bibfield  {author} {\bibinfo {author} {\bibfnamefont {S.}~\bibnamefont
  {Abe}}\ and\ \bibinfo {author} {\bibfnamefont {A.}~\bibnamefont
  {Rajagopal}},\ }\href {\doibase
  https://doi.org/10.1016/S0378-4371(00)00476-3} {\bibfield  {journal}
  {\bibinfo  {journal} {Physica A: Statistical Mechanics and its Applications}\
  }\textbf {\bibinfo {volume} {289}},\ \bibinfo {pages} {157} (\bibinfo {year}
  {2001})}\BibitemShut {NoStop}%
\bibitem [{\citenamefont {Horodecki}\ \emph {et~al.}(2005)\citenamefont
  {Horodecki}, \citenamefont {Oppenheim},\ and\ \citenamefont
  {Winter}}]{statemerging1}%
  \BibitemOpen
  \bibfield  {author} {\bibinfo {author} {\bibfnamefont {M.}~\bibnamefont
  {Horodecki}}, \bibinfo {author} {\bibfnamefont {J.}~\bibnamefont
  {Oppenheim}}, \ and\ \bibinfo {author} {\bibfnamefont {A.}~\bibnamefont
  {Winter}},\ }\href {\doibase 10.1038/nature03909} {\bibfield  {journal}
  {\bibinfo  {journal} {Nature}\ }\textbf {\bibinfo {volume} {436}},\ \bibinfo
  {pages} {673} (\bibinfo {year} {2005})}\BibitemShut {NoStop}%
\bibitem [{\citenamefont {Horodecki}\ \emph {et~al.}(2007)\citenamefont
  {Horodecki}, \citenamefont {Oppenheim},\ and\ \citenamefont
  {Winter}}]{statemerging2}%
  \BibitemOpen
  \bibfield  {author} {\bibinfo {author} {\bibfnamefont {M.}~\bibnamefont
  {Horodecki}}, \bibinfo {author} {\bibfnamefont {J.}~\bibnamefont
  {Oppenheim}}, \ and\ \bibinfo {author} {\bibfnamefont {A.}~\bibnamefont
  {Winter}},\ }\href {\doibase 10.1007/s00220-006-0118-x} {\bibfield  {journal}
  {\bibinfo  {journal} {Communications in Mathematical Physics}\ }\textbf
  {\bibinfo {volume} {269}},\ \bibinfo {pages} {107} (\bibinfo {year}
  {2007})}\BibitemShut {NoStop}%
\bibitem [{\citenamefont {Bennett}\ and\ \citenamefont {Wiesner}(1992)}]{sdc1}%
  \BibitemOpen
  \bibfield  {author} {\bibinfo {author} {\bibfnamefont {C.~H.}\ \bibnamefont
  {Bennett}}\ and\ \bibinfo {author} {\bibfnamefont {S.~J.}\ \bibnamefont
  {Wiesner}},\ }\href {\doibase 10.1103/PhysRevLett.69.2881} {\bibfield
  {journal} {\bibinfo  {journal} {Phys. Rev. Lett.}\ }\textbf {\bibinfo
  {volume} {69}},\ \bibinfo {pages} {2881} (\bibinfo {year}
  {1992})}\BibitemShut {NoStop}%
\bibitem [{\citenamefont {Bru\ss{}}\ \emph {et~al.}(2004)\citenamefont
  {Bru\ss{}}, \citenamefont {D'Ariano}, \citenamefont {Lewenstein},
  \citenamefont {Macchiavello}, \citenamefont {Sen(De)},\ and\ \citenamefont
  {Sen}}]{sdc2}%
  \BibitemOpen
  \bibfield  {author} {\bibinfo {author} {\bibfnamefont {D.}~\bibnamefont
  {Bru\ss{}}}, \bibinfo {author} {\bibfnamefont {G.~M.}\ \bibnamefont
  {D'Ariano}}, \bibinfo {author} {\bibfnamefont {M.}~\bibnamefont
  {Lewenstein}}, \bibinfo {author} {\bibfnamefont {C.}~\bibnamefont
  {Macchiavello}}, \bibinfo {author} {\bibfnamefont {A.}~\bibnamefont
  {Sen(De)}}, \ and\ \bibinfo {author} {\bibfnamefont {U.}~\bibnamefont
  {Sen}},\ }\href {\doibase 10.1103/PhysRevLett.93.210501} {\bibfield
  {journal} {\bibinfo  {journal} {Phys. Rev. Lett.}\ }\textbf {\bibinfo
  {volume} {93}},\ \bibinfo {pages} {210501} (\bibinfo {year}
  {2004})}\BibitemShut {NoStop}%
\bibitem [{\citenamefont {Prabhu}\ \emph {et~al.}(2013)\citenamefont {Prabhu},
  \citenamefont {Pati}, \citenamefont {Sen(De)},\ and\ \citenamefont
  {Sen}}]{sdc3}%
  \BibitemOpen
  \bibfield  {author} {\bibinfo {author} {\bibfnamefont {R.}~\bibnamefont
  {Prabhu}}, \bibinfo {author} {\bibfnamefont {A.~K.}\ \bibnamefont {Pati}},
  \bibinfo {author} {\bibfnamefont {A.}~\bibnamefont {Sen(De)}}, \ and\
  \bibinfo {author} {\bibfnamefont {U.}~\bibnamefont {Sen}},\ }\href {\doibase
  10.1103/PhysRevA.87.052319} {\bibfield  {journal} {\bibinfo  {journal} {Phys.
  Rev. A}\ }\textbf {\bibinfo {volume} {87}},\ \bibinfo {pages} {052319}
  (\bibinfo {year} {2013})}\BibitemShut {NoStop}%
\bibitem [{\citenamefont {Devetak}\ and\ \citenamefont
  {Winter}(2005)}]{one_way_entangle}%
  \BibitemOpen
  \bibfield  {author} {\bibinfo {author} {\bibfnamefont {I.}~\bibnamefont
  {Devetak}}\ and\ \bibinfo {author} {\bibfnamefont {A.}~\bibnamefont
  {Winter}},\ }\href {\doibase 10.1098/rspa.2004.1372} {\bibfield  {journal}
  {\bibinfo  {journal} {Proceedings of the Royal Society A: Mathematical,
  Physical and Engineering Sciences}\ }\textbf {\bibinfo {volume} {461}},\
  \bibinfo {pages} {207} (\bibinfo {year} {2005})}\BibitemShut {NoStop}%
\bibitem [{\citenamefont {Berta}\ \emph {et~al.}(2010)\citenamefont {Berta},
  \citenamefont {Christandl}, \citenamefont {Colbeck}, \citenamefont {Renes},\
  and\ \citenamefont {Renner}}]{ur1}%
  \BibitemOpen
  \bibfield  {author} {\bibinfo {author} {\bibfnamefont {M.}~\bibnamefont
  {Berta}}, \bibinfo {author} {\bibfnamefont {M.}~\bibnamefont {Christandl}},
  \bibinfo {author} {\bibfnamefont {R.}~\bibnamefont {Colbeck}}, \bibinfo
  {author} {\bibfnamefont {J.~M.}\ \bibnamefont {Renes}}, \ and\ \bibinfo
  {author} {\bibfnamefont {R.}~\bibnamefont {Renner}},\ }\href {\doibase
  10.1038/nphys1734} {\bibfield  {journal} {\bibinfo  {journal} {Nature
  Physics}\ }\textbf {\bibinfo {volume} {6}},\ \bibinfo {pages} {659} (\bibinfo
  {year} {2010})}\BibitemShut {NoStop}%
\bibitem [{\citenamefont {Yang}\ \emph {et~al.}(2019)\citenamefont {Yang},
  \citenamefont {Horodecki},\ and\ \citenamefont {Winter}}]{private_rand}%
  \BibitemOpen
  \bibfield  {author} {\bibinfo {author} {\bibfnamefont {D.}~\bibnamefont
  {Yang}}, \bibinfo {author} {\bibfnamefont {K.}~\bibnamefont {Horodecki}}, \
  and\ \bibinfo {author} {\bibfnamefont {A.}~\bibnamefont {Winter}},\ }\href
  {\doibase 10.1103/PhysRevLett.123.170501} {\bibfield  {journal} {\bibinfo
  {journal} {Phys. Rev. Lett.}\ }\textbf {\bibinfo {volume} {123}},\ \bibinfo
  {pages} {170501} (\bibinfo {year} {2019})}\BibitemShut {NoStop}%
\bibitem [{\citenamefont {Azuma}\ and\ \citenamefont
  {Subramanian}(2018)}]{blackholes_entropy}%
  \BibitemOpen
  \bibfield  {author} {\bibinfo {author} {\bibfnamefont {K.}~\bibnamefont
  {Azuma}}\ and\ \bibinfo {author} {\bibfnamefont {S.}~\bibnamefont
  {Subramanian}},\ }\href@noop {} {\bibfield  {journal} {\bibinfo  {journal}
  {arXiv preprint arXiv:1807.06753}\ } (\bibinfo {year} {2018})}\BibitemShut
  {NoStop}%
\bibitem [{\citenamefont {Azuma}\ and\ \citenamefont
  {Kato}(2020)}]{blackholes_second_law}%
  \BibitemOpen
  \bibfield  {author} {\bibinfo {author} {\bibfnamefont {K.}~\bibnamefont
  {Azuma}}\ and\ \bibinfo {author} {\bibfnamefont {G.}~\bibnamefont {Kato}},\
  }\href@noop {} {\bibfield  {journal} {\bibinfo  {journal} {arXiv preprint
  arXiv:2001.02897}\ } (\bibinfo {year} {2020})}\BibitemShut {NoStop}%
\bibitem [{\citenamefont {del Rio}\ \emph {et~al.}(2011)\citenamefont {del
  Rio}, \citenamefont {{\AA}berg}, \citenamefont {Renner}, \citenamefont
  {Dahlsten},\ and\ \citenamefont {Vedral}}]{thermo}%
  \BibitemOpen
  \bibfield  {author} {\bibinfo {author} {\bibfnamefont {L.}~\bibnamefont {del
  Rio}}, \bibinfo {author} {\bibfnamefont {J.}~\bibnamefont {{\AA}berg}},
  \bibinfo {author} {\bibfnamefont {R.}~\bibnamefont {Renner}}, \bibinfo
  {author} {\bibfnamefont {O.}~\bibnamefont {Dahlsten}}, \ and\ \bibinfo
  {author} {\bibfnamefont {V.}~\bibnamefont {Vedral}},\ }\href {\doibase
  10.1038/nature10123} {\bibfield  {journal} {\bibinfo  {journal} {Nature}\
  }\textbf {\bibinfo {volume} {474}},\ \bibinfo {pages} {61} (\bibinfo {year}
  {2011})}\BibitemShut {NoStop}%
\bibitem [{\citenamefont {Patro}\ \emph {et~al.}(2017)\citenamefont {Patro},
  \citenamefont {Chakrabarty},\ and\ \citenamefont {Ganguly}}]{patro}%
  \BibitemOpen
  \bibfield  {author} {\bibinfo {author} {\bibfnamefont {S.}~\bibnamefont
  {Patro}}, \bibinfo {author} {\bibfnamefont {I.}~\bibnamefont {Chakrabarty}},
  \ and\ \bibinfo {author} {\bibfnamefont {N.}~\bibnamefont {Ganguly}},\ }\href
  {\doibase 10.1103/PhysRevA.96.062102} {\bibfield  {journal} {\bibinfo
  {journal} {Phys. Rev. A}\ }\textbf {\bibinfo {volume} {96}},\ \bibinfo
  {pages} {062102} (\bibinfo {year} {2017})}\BibitemShut {NoStop}%
\bibitem [{\citenamefont {Nielsen}\ and\ \citenamefont
  {Chuang}(2011)}]{nielsen}%
  \BibitemOpen
  \bibfield  {author} {\bibinfo {author} {\bibfnamefont {M.~A.}\ \bibnamefont
  {Nielsen}}\ and\ \bibinfo {author} {\bibfnamefont {I.~L.}\ \bibnamefont
  {Chuang}},\ }\href@noop {} {\emph {\bibinfo {title} {Quantum Computation and
  Quantum Information: 10th Anniversary Edition}}},\ \bibinfo {edition} {10th}\
  ed.\ (\bibinfo  {publisher} {Cambridge University Press},\ \bibinfo {address}
  {USA},\ \bibinfo {year} {2011})\BibitemShut {NoStop}%
\bibitem [{\citenamefont {Alicki}\ and\ \citenamefont {Fannes}(2004)}]{alicki}%
  \BibitemOpen
  \bibfield  {author} {\bibinfo {author} {\bibfnamefont {R.}~\bibnamefont
  {Alicki}}\ and\ \bibinfo {author} {\bibfnamefont {M.}~\bibnamefont
  {Fannes}},\ }\href {\doibase 10.1088/0305-4470/37/5/l01} {\bibfield
  {journal} {\bibinfo  {journal} {Journal of Physics A: Mathematical and
  General}\ }\textbf {\bibinfo {volume} {37}},\ \bibinfo {pages} {L55}
  (\bibinfo {year} {2004})}\BibitemShut {NoStop}%
\bibitem [{\citenamefont {Winter}(2016)}]{winter}%
  \BibitemOpen
  \bibfield  {author} {\bibinfo {author} {\bibfnamefont {A.}~\bibnamefont
  {Winter}},\ }\href {\doibase 10.1007/s00220-016-2609-8} {\bibfield  {journal}
  {\bibinfo  {journal} {Communications in Mathematical Physics}\ }\textbf
  {\bibinfo {volume} {347}},\ \bibinfo {pages} {291} (\bibinfo {year}
  {2016})}\BibitemShut {NoStop}%
\bibitem [{\citenamefont {Holmes}(1975)}]{hahn}%
  \BibitemOpen
  \bibfield  {author} {\bibinfo {author} {\bibfnamefont {R.~B.}\ \bibnamefont
  {Holmes}},\ }\href {\doibase 10.1007/978-1-4684-9369-6} {\emph {\bibinfo
  {title} {Geometric Functional Analysis and its Applications}}}\ (\bibinfo
  {publisher} {Springer New York},\ \bibinfo {year} {1975})\BibitemShut
  {NoStop}%
\bibitem [{\citenamefont {Wilde}(2013)}]{wilde}%
  \BibitemOpen
  \bibfield  {author} {\bibinfo {author} {\bibfnamefont {M.~M.}\ \bibnamefont
  {Wilde}},\ }\href {\doibase 10.1017/CBO9781139525343} {\emph {\bibinfo
  {title} {Quantum Information Theory}}}\ (\bibinfo  {publisher} {Cambridge
  University Press},\ \bibinfo {year} {2013})\BibitemShut {NoStop}%
\bibitem [{\citenamefont {Cerf}\ and\ \citenamefont {Adami}(1999)}]{condamp}%
  \BibitemOpen
  \bibfield  {author} {\bibinfo {author} {\bibfnamefont {N.~J.}\ \bibnamefont
  {Cerf}}\ and\ \bibinfo {author} {\bibfnamefont {C.}~\bibnamefont {Adami}},\
  }\href {\doibase 10.1103/PhysRevA.60.893} {\bibfield  {journal} {\bibinfo
  {journal} {Phys. Rev. A}\ }\textbf {\bibinfo {volume} {60}},\ \bibinfo
  {pages} {893} (\bibinfo {year} {1999})}\BibitemShut {NoStop}%
\bibitem [{\citenamefont {Friis}\ \emph {et~al.}(2017)\citenamefont {Friis},
  \citenamefont {Bulusu},\ and\ \citenamefont {Bertlmann}}]{cond_geo}%
  \BibitemOpen
  \bibfield  {author} {\bibinfo {author} {\bibfnamefont {N.}~\bibnamefont
  {Friis}}, \bibinfo {author} {\bibfnamefont {S.}~\bibnamefont {Bulusu}}, \
  and\ \bibinfo {author} {\bibfnamefont {R.~A.}\ \bibnamefont {Bertlmann}},\
  }\href {\doibase 10.1088/1751-8121/aa5dfd} {\bibfield  {journal} {\bibinfo
  {journal} {Journal of Physics A: Mathematical and Theoretical}\ }\textbf
  {\bibinfo {volume} {50}},\ \bibinfo {pages} {125301} (\bibinfo {year}
  {2017})}\BibitemShut {NoStop}%
\bibitem [{Note1()}]{Note1}%
  \BibitemOpen
  \bibinfo {note} {This is because $\protect \qopname \relax o{log}0$ is
  undefined}\BibitemShut {NoStop}%
\bibitem [{\citenamefont {Werner}(1989)}]{werner}%
  \BibitemOpen
  \bibfield  {author} {\bibinfo {author} {\bibfnamefont {R.~F.}\ \bibnamefont
  {Werner}},\ }\href {\doibase 10.1103/physreva.40.4277} {\bibfield  {journal}
  {\bibinfo  {journal} {Physical Review A}\ }\textbf {\bibinfo {volume} {40}},\
  \bibinfo {pages} {4277} (\bibinfo {year} {1989})}\BibitemShut {NoStop}%
\bibitem [{\citenamefont {G\"{u}hne}\ and\ \citenamefont
  {T{\'{o}}th}(2009)}]{guhne}%
  \BibitemOpen
  \bibfield  {author} {\bibinfo {author} {\bibfnamefont {O.}~\bibnamefont
  {G\"{u}hne}}\ and\ \bibinfo {author} {\bibfnamefont {G.}~\bibnamefont
  {T{\'{o}}th}},\ }\href {\doibase 10.1016/j.physrep.2009.02.004} {\bibfield
  {journal} {\bibinfo  {journal} {Physics Reports}\ }\textbf {\bibinfo {volume}
  {474}},\ \bibinfo {pages} {1} (\bibinfo {year} {2009})}\BibitemShut {NoStop}%
\bibitem [{\citenamefont {Grant}\ and\ \citenamefont {Boyd}(2014)}]{cvx_1}%
  \BibitemOpen
  \bibfield  {author} {\bibinfo {author} {\bibfnamefont {M.}~\bibnamefont
  {Grant}}\ and\ \bibinfo {author} {\bibfnamefont {S.}~\bibnamefont {Boyd}},\
  }\href@noop {} {\enquote {\bibinfo {title} {{CVX}: Matlab software for
  disciplined convex programming, version 2.1},}\ }\bibinfo {howpublished}
  {\url{http://cvxr.com/cvx}} (\bibinfo {year} {2014})\BibitemShut {NoStop}%
\bibitem [{\citenamefont {Grant}\ and\ \citenamefont {Boyd}(2008)}]{cvx_2}%
  \BibitemOpen
  \bibfield  {author} {\bibinfo {author} {\bibfnamefont {M.}~\bibnamefont
  {Grant}}\ and\ \bibinfo {author} {\bibfnamefont {S.}~\bibnamefont {Boyd}},\
  }in\ \href@noop {} {\emph {\bibinfo {booktitle} {Recent Advances in Learning
  and Control}}},\ \bibinfo {series and number} {Lecture Notes in Control and
  Information Sciences},\ \bibinfo {editor} {edited by\ \bibinfo {editor}
  {\bibfnamefont {V.}~\bibnamefont {Blondel}}, \bibinfo {editor} {\bibfnamefont
  {S.}~\bibnamefont {Boyd}}, \ and\ \bibinfo {editor} {\bibfnamefont
  {H.}~\bibnamefont {Kimura}}}\ (\bibinfo  {publisher} {Springer-Verlag
  Limited},\ \bibinfo {year} {2008})\ pp.\ \bibinfo {pages} {95--110},\
  \bibinfo {note} {\url{http://stanford.edu/~boyd/graph_dcp.html}}\BibitemShut
  {NoStop}%
\bibitem [{\citenamefont {Cubitt}(2013)}]{quantinf}%
  \BibitemOpen
  \bibfield  {author} {\bibinfo {author} {\bibfnamefont {T.}~\bibnamefont
  {Cubitt}},\ }\href@noop {} {\enquote {\bibinfo {title} {Quantinf package,
  version 0.5.1},}\ }\bibinfo {howpublished}
  {\url{https://www.dr-qubit.org/matlab.html}} (\bibinfo {year}
  {2013})\BibitemShut {NoStop}%
\bibitem [{\citenamefont {Fawzi}\ \emph {et~al.}(2018)\citenamefont {Fawzi},
  \citenamefont {Saunderson},\ and\ \citenamefont {Parrilo}}]{cvxquad}%
  \BibitemOpen
  \bibfield  {author} {\bibinfo {author} {\bibfnamefont {H.}~\bibnamefont
  {Fawzi}}, \bibinfo {author} {\bibfnamefont {J.}~\bibnamefont {Saunderson}}, \
  and\ \bibinfo {author} {\bibfnamefont {P.~A.}\ \bibnamefont {Parrilo}},\
  }\href@noop {} {\bibfield  {journal} {\bibinfo  {journal} {Foundations of
  Computational Mathematics}\ } (\bibinfo {year} {2018})},\ \bibinfo {note}
  {package cvxquad at \url{https://github.com/hfawzi/cvxquad}}\BibitemShut
  {NoStop}%
\bibitem [{\citenamefont {Wolf}(2003)}]{partial_transpose}%
  \BibitemOpen
  \bibfield  {author} {\bibinfo {author} {\bibfnamefont {M.~M.}\ \bibnamefont
  {Wolf}},\ }\href@noop {} {\emph {\bibinfo {title} {Partial transposition in
  quantum information theory}}}\ (\bibinfo {year} {2003})\BibitemShut {NoStop}%
\bibitem [{\citenamefont {Bertlmann}\ and\ \citenamefont
  {Krammer}(2009)}]{bertlmann}%
  \BibitemOpen
  \bibfield  {author} {\bibinfo {author} {\bibfnamefont {R.~A.}\ \bibnamefont
  {Bertlmann}}\ and\ \bibinfo {author} {\bibfnamefont {P.}~\bibnamefont
  {Krammer}},\ }\href {\doibase 10.1016/j.aop.2009.01.008} {\bibfield
  {journal} {\bibinfo  {journal} {Annals of Physics}\ }\textbf {\bibinfo
  {volume} {324}},\ \bibinfo {pages} {1388} (\bibinfo {year}
  {2009})}\BibitemShut {NoStop}%
\bibitem [{\citenamefont {Bertlmann}\ \emph {et~al.}(2002)\citenamefont
  {Bertlmann}, \citenamefont {Narnhofer},\ and\ \citenamefont
  {Thirring}}]{ref_14}%
  \BibitemOpen
  \bibfield  {author} {\bibinfo {author} {\bibfnamefont {R.~A.}\ \bibnamefont
  {Bertlmann}}, \bibinfo {author} {\bibfnamefont {H.}~\bibnamefont
  {Narnhofer}}, \ and\ \bibinfo {author} {\bibfnamefont {W.}~\bibnamefont
  {Thirring}},\ }\href {\doibase 10.1103/PhysRevA.66.032319} {\bibfield
  {journal} {\bibinfo  {journal} {Phys. Rev. A}\ }\textbf {\bibinfo {volume}
  {66}},\ \bibinfo {pages} {032319} (\bibinfo {year} {2002})}\BibitemShut
  {NoStop}%
\bibitem [{\citenamefont {Bertlmann}\ \emph {et~al.}(2005)\citenamefont
  {Bertlmann}, \citenamefont {Durstberger}, \citenamefont {Hiesmayr},\ and\
  \citenamefont {Krammer}}]{ref_25}%
  \BibitemOpen
  \bibfield  {author} {\bibinfo {author} {\bibfnamefont {R.~A.}\ \bibnamefont
  {Bertlmann}}, \bibinfo {author} {\bibfnamefont {K.}~\bibnamefont
  {Durstberger}}, \bibinfo {author} {\bibfnamefont {B.~C.}\ \bibnamefont
  {Hiesmayr}}, \ and\ \bibinfo {author} {\bibfnamefont {P.}~\bibnamefont
  {Krammer}},\ }\href {\doibase 10.1103/PhysRevA.72.052331} {\bibfield
  {journal} {\bibinfo  {journal} {Phys. Rev. A}\ }\textbf {\bibinfo {volume}
  {72}},\ \bibinfo {pages} {052331} (\bibinfo {year} {2005})}\BibitemShut
  {NoStop}%
\bibitem [{\citenamefont {Wang}\ \emph {et~al.}(2015)\citenamefont {Wang},
  \citenamefont {Xu}, \citenamefont {Campbell},\ and\ \citenamefont
  {Severini}}]{weakly_optimal}%
  \BibitemOpen
  \bibfield  {author} {\bibinfo {author} {\bibfnamefont {B.-H.}\ \bibnamefont
  {Wang}}, \bibinfo {author} {\bibfnamefont {H.-R.}\ \bibnamefont {Xu}},
  \bibinfo {author} {\bibfnamefont {S.}~\bibnamefont {Campbell}}, \ and\
  \bibinfo {author} {\bibfnamefont {S.}~\bibnamefont {Severini}},\ }\href@noop
  {} {\bibfield  {journal} {\bibinfo  {journal} {Quantum Info. Comput.}\
  }\textbf {\bibinfo {volume} {15}},\ \bibinfo {pages} {1109–1121} (\bibinfo
  {year} {2015})}\BibitemShut {NoStop}%
\bibitem [{\citenamefont {Berta}\ \emph {et~al.}(2012)\citenamefont {Berta},
  \citenamefont {Fawzi},\ and\ \citenamefont {Wehner}}]{randomness_extractors}%
  \BibitemOpen
  \bibfield  {author} {\bibinfo {author} {\bibfnamefont {M.}~\bibnamefont
  {Berta}}, \bibinfo {author} {\bibfnamefont {O.}~\bibnamefont {Fawzi}}, \ and\
  \bibinfo {author} {\bibfnamefont {S.}~\bibnamefont {Wehner}},\ }in\ \href
  {\doibase 10.1007/978-3-642-32009-5_45} {\emph {\bibinfo {booktitle} {Lecture
  Notes in Computer Science}}}\ (\bibinfo  {publisher} {Springer Berlin
  Heidelberg},\ \bibinfo {year} {2012})\ pp.\ \bibinfo {pages}
  {776--793}\BibitemShut {NoStop}%
\bibitem [{\citenamefont {Bennett}\ \emph {et~al.}(1996)\citenamefont
  {Bennett}, \citenamefont {DiVincenzo}, \citenamefont {Smolin},\ and\
  \citenamefont {Wootters}}]{distillation}%
  \BibitemOpen
  \bibfield  {author} {\bibinfo {author} {\bibfnamefont {C.~H.}\ \bibnamefont
  {Bennett}}, \bibinfo {author} {\bibfnamefont {D.~P.}\ \bibnamefont
  {DiVincenzo}}, \bibinfo {author} {\bibfnamefont {J.~A.}\ \bibnamefont
  {Smolin}}, \ and\ \bibinfo {author} {\bibfnamefont {W.~K.}\ \bibnamefont
  {Wootters}},\ }\href {\doibase 10.1103/PhysRevA.54.3824} {\bibfield
  {journal} {\bibinfo  {journal} {Phys. Rev. A}\ }\textbf {\bibinfo {volume}
  {54}},\ \bibinfo {pages} {3824} (\bibinfo {year} {1996})}\BibitemShut
  {NoStop}%
\bibitem [{\citenamefont {Gyongyosi}\ \emph {et~al.}(2018)\citenamefont
  {Gyongyosi}, \citenamefont {Imre},\ and\ \citenamefont {Nguyen}}]{coherent}%
  \BibitemOpen
  \bibfield  {author} {\bibinfo {author} {\bibfnamefont {L.}~\bibnamefont
  {Gyongyosi}}, \bibinfo {author} {\bibfnamefont {S.}~\bibnamefont {Imre}}, \
  and\ \bibinfo {author} {\bibfnamefont {H.~V.}\ \bibnamefont {Nguyen}},\
  }\href {\doibase 10.1109/COMST.2017.2786748} {\bibfield  {journal} {\bibinfo
  {journal} {IEEE Communications Surveys Tutorials}\ }\textbf {\bibinfo
  {volume} {20}},\ \bibinfo {pages} {1149} (\bibinfo {year}
  {2018})}\BibitemShut {NoStop}%
\bibitem [{\citenamefont {{Smith}}(2010)}]{capacity}%
  \BibitemOpen
  \bibfield  {author} {\bibinfo {author} {\bibfnamefont {G.}~\bibnamefont
  {{Smith}}},\ }\href@noop {} {\bibfield  {journal} {\bibinfo  {journal} {arXiv
  e-prints}\ ,\ \bibinfo {eid} {arXiv:1007.2855}} (\bibinfo {year} {2010})},\
  \Eprint {http://arxiv.org/abs/1007.2855} {arXiv:1007.2855 [cs.IT]}
  \BibitemShut {NoStop}%
\bibitem [{\citenamefont {Robertson}(1929)}]{ur}%
  \BibitemOpen
  \bibfield  {author} {\bibinfo {author} {\bibfnamefont {H.~P.}\ \bibnamefont
  {Robertson}},\ }\href {\doibase 10.1103/physrev.34.163} {\bibfield  {journal}
  {\bibinfo  {journal} {Physical Review}\ }\textbf {\bibinfo {volume} {34}},\
  \bibinfo {pages} {163} (\bibinfo {year} {1929})}\BibitemShut {NoStop}%
\bibitem [{\citenamefont {Kraus}(1987)}]{complementary}%
  \BibitemOpen
  \bibfield  {author} {\bibinfo {author} {\bibfnamefont {K.}~\bibnamefont
  {Kraus}},\ }\href {\doibase 10.1103/PhysRevD.35.3070} {\bibfield  {journal}
  {\bibinfo  {journal} {Phys. Rev. D}\ }\textbf {\bibinfo {volume} {35}},\
  \bibinfo {pages} {3070} (\bibinfo {year} {1987})}\BibitemShut {NoStop}%
\bibitem [{\citenamefont {Maassen}\ and\ \citenamefont
  {Uffink}(1988)}]{generalized_uncertainty}%
  \BibitemOpen
  \bibfield  {author} {\bibinfo {author} {\bibfnamefont {H.}~\bibnamefont
  {Maassen}}\ and\ \bibinfo {author} {\bibfnamefont {J.~B.~M.}\ \bibnamefont
  {Uffink}},\ }\href {\doibase 10.1103/PhysRevLett.60.1103} {\bibfield
  {journal} {\bibinfo  {journal} {Phys. Rev. Lett.}\ }\textbf {\bibinfo
  {volume} {60}},\ \bibinfo {pages} {1103} (\bibinfo {year}
  {1988})}\BibitemShut {NoStop}%
\end{thebibliography}

\end{document}